\newcommand{\markov}{\mathrel\multimap\joinrel\mathrel-\mspace{-9mu}\joinrel\mathrel-}
\newtheorem{theorem}{Theorem}
\newtheorem{corollary}{Corollary}
\newtheorem{example}{Example}
\newtheorem{remark}{Remark}
\newtheorem{proof}{Proof}
\newtheorem{conjecture}{Conjecture}
\title{Diamond Message Set Groupcasting: From an Inner Bound for the DM Broadcast Channel to the Capacity Region of the Combination Network} 
\author{Mohamed Salman and Mahesh K. Varanasi \\
\thanks{This work was presented in part at the 2020 IEEE International Symposium on Information Theory, LA, CA \cite{salman2020diamond}. This research was funded in part by the 2018 and 2019 Qualcomm Faculty Awards. M. Salman and M. K. Varanasi are with the Electrical, Computer and Energy Engineering Department, University of Colorado, Boulder, CO, USA (emails: \{mohamed.salman, varanasi\}@colorado.edu).}
}
\begin{document}

\maketitle
\thispagestyle{plain}
\pagestyle{plain}
\pagenumbering{arabic}

\begin{abstract}
Multiple groupcasting over the broadcast channel (BC) is studied in a special setting. In particular, an inner bound is obtained for the $K$-receiver discrete memoryless (DM) BC for the diamond message set which consists of four groupcast messages: one desired by all receivers, one by all but two receivers, and two more desired by all but each one of those two receivers. The inner bound is based on rate-splitting and superposition coding  and is given in explicit form herein as a union over coding distributions of four-dimensional polytopes. When specialized to the so-called combination network, which is a class of three-layer (two-hop) broadcast networks parameterized by $2^K-1$ finite-and-arbitrary-capacity noiseless links from the source node in the first layer to as many nodes of the second layer, our top-down approach from the DM BC to the combination network yields an explicit inner bound as a single polytope via the identification of a single coding distribution. This inner bound consists of inequalities which, in a problem that is akin to finding a few needles in a haystack, are then identified to be within the class of a plethora of (indeed, infinitely many) generalized cut-set outer bounds recently obtained by Salimi et al for broadcast networks. We hence establish the capacity region of the general $K$-user combination network for the diamond message set, and do so in explicit form. Such a result implies a certain strength of our inner bound for the DM BC in that it (a) produces a hitherto unknown capacity region when specialized to the combination network and (b) may capture many combinatorial aspects of the capacity region of the $K$-receiver DM BC  itself (for the diamond message set). Moreover, we further extend that inner bound by adding binning to it and providing that inner bound also in explicit form as a union over coding distributions of four-dimensional polytopes in the message rates.
\end{abstract}


\section{Introduction}
\label{Sec_Into}

A general order-theoretic framework for groupcasting over the $K$-receiver DM BC was proposed 
in \cite[Theorem 1]{romero2016superposition}, that allows for a succinct albeit indirect description of the rate region achieved by {\em up-set} rate-splitting and superposition coding 
for simultaneously sending the complete set of $2^K{-}1$ independent messages, each desired by some distinct subset of receivers. In up-set rate-splitting, a message intended for some subset of receivers is split into sub-messages, with each sub-message to be delivered to some distinct subset of receivers for all possible such subsets that include the originally intended set of receivers. The set of sub-messages that are intended for the same group of receivers are then collected to form a new reconstructed message. 
Among the numerous choices \cite{romero2017unifying}, the type of superposition coding used for generating the codebooks for the reconstructed messages is the one with the superposition order taken to the subset inclusion order \cite{romero2016superposition}. Finally, each receiver jointly decodes its desired reconstructed messages which contain the desired messages as well as the partial interference contained in the undesired sub-messages assigned to it via rate-splitting. 

The inner bound in \cite[Theorem 1]{romero2016superposition} is given in implicit form in that it gives an indirect description of per-distribution polytopes in terms of the split rates (in higher dimension) rather than the original messages rates. In principle, the split rates can be projected away with Fourier Motzkin elimination (FME) \cite{schrijver1998theory}, but in practice, this is only possible for small settings, i.e., small number of receivers and/or small message sets since projecting away all split rates in this general setting is intractable in general.

In a different line of work on network coding, certain three-layered broadcast (single-source) networks named combination networks were introduced in \cite{ngai2004network} to demonstrate that network coding for a single multicast session can attain unbounded gain over routing alone. A combination network is defined more generally in \cite{salimi2015generalized} to be a class of three-layer (two-hop) broadcast networks (as depicted in Fig. \ref{Fig_K3_combination networks} for $K=3$) parameterized by $2^K{-}1$ finite-and-arbitrary-capacity noiseless links from the source node in the first layer to as many nodes of the second layer, with each node in the second layer connected to a distinct subset of $K$ destination nodes via infinite capacity links. For the complete message set 
linear network coding schemes and matching converses were provided to obtain the capacity region of the two and three-receiver combination network in \cite{grokop2008fundamental} (see also \cite{salimi2015generalized}) and the symmetric capacity region (in which the rates of the messages desired by the same number of users are equal) of the special class of symmetric combination networks, in which the capacity of links to nodes in the second layer that are connected to the same number of receivers are identical, in \cite{tian2011latent,salimi2015generalized}. The general 
capacity region of the general 
combination network for a complete message set (with $2^K{-}1$ messages) for $K{>}3$ remains an open problem. In this regard, the work of \cite{bidokhti2016capacity} must be mentioned for having produced partial results: in particular, for two nested messages, i.e., a multicast message intended for all receivers and a private message intended for a subset of the receivers, the capacity of the $K$-user general combination network is established in \cite{bidokhti2016capacity} when the number of receivers that demand only the multicast message is no more than three.

The general combination network can be seen as a special class of deterministic DM BCs as observed in \cite{romero2016superposition}. A top-down approach was initiated therein to study the combination network. In particular, a portion (i.e., an inner bound) of the general inner bound of \cite[Theorem 1]{romero2016superposition} based on up-set rate-splitting and superposition coding for the DM BC for a complete message set is proposed as an achievable rate region for the combination network in \cite[Theorem 2]{romero2016superposition} through the specification of a single random coding distribution. Its indirect description notwithstanding, it was shown in \cite{romero2016superposition}, via its alternative description in \cite[Theorem 2]{romero2016superposition}, that that achievable rate region recovers the capacity (for $K=3$) and symmetric capacity (for general $K$) for symmetric combination networks, results previously obtained via linear network coding in \cite{grokop2008fundamental,tian2011latent,salimi2015generalized}. Hence, the work in \cite{romero2016superposition} provides a certain validation for the strength of the rate-splitting and superposition coding inner bound of
\cite[Theorem 1]{romero2016superposition} for the $K$-receiver DM BC itself for the complete message set. 

Bolstered by the success of the top-down approach of 
\cite{romero2016superposition}, we continue that study but for general (not only symmetric) combination networks for any $K>3$ but by focusing on the diamond message set that consists of four messages, one desired by all $K$ receivers, one by all but two receivers, and two more desired by all but each one of those two receivers\footnote{The name diamond message set is derived from the depiction of the associated ordered set of four message indices, each message index being a subset of the set of all receiver indices at which that message is desired, with the order taken to be the subset inclusion order, in the form of a Hasse diagram, which would hence be diamond-shaped.}. In particular, we show that, adopting the general framework of \cite[Theorem 1]{romero2016superposition} as is, even in this incomplete message set case\footnote{A more general framework that builds on that of \cite{romero2016superposition} is given in \cite{romero2017rate} to account for incomplete message sets in general, but that extra generality turns out to be unnecessary for the sake of discovering the capacity of the combination network for the diamond message set, as demonstrated here.}
is sufficient to produce the hitherto unknown capacity region of the combination network for the diamond message set.
The capacity result for the combination network in the special case of three degraded messages, obtained by setting the rate of one of the messages desired by all but one receiver to zero, was also previously unknown.
Moreover, special instances of the diamond message set capacity result for various two message set cases, by setting the rates of the other two messages to zero, recover two-message capacity results for the combination network from \cite{bidokhti2016capacity} and \cite{salman2018achievable}. 

 In particular, 
 we present an inner bound for $K$-receiver DM BC in terms of the actual rates of the four messages instead of the split rates as in \cite[Theorem 1]{romero2016superposition} by projecting away the split rates via Fourier Motzkin elimination \cite{schrijver1986theory} as described in Appendix \ref{Appendix_FME_TH1}, in spite of the indeterminate number of inequalities that describe the per-distribution achievable rate region in original- and split-rate space. We then specialize this inner bound to the combination network by choosing a single distribution for the auxiliary and input random variables to obtain an explicit polyhedral inner bound for the general combination network. 

We establish the capacity result by identifying
the inequalities present in the aforementioned explicit inner bound to be within the class of infinitely many possible generalized cut set bounds proposed in \cite{salimi2015generalized} as outer bounds for broadcast networks based on the sub-modularity of entropy. Notably, our order-theoretic description of the inner bound enables that identification by permitting a description of the inequalities in the inner bound in a form that is similar to the form of the generalized cut-set bounds of \cite{salimi2015generalized}. 

Beyond establishing the capacity of the general $K$-user combination network for the diamond message set, our top-down approach suggests that the inner bound of \cite[Theorem 1]{romero2016superposition} given in a more explicit form here is a good one for the much more general DM BC (with the diamond message set) in that it specializes to the capacity region in the combination network without symmetry assumptions, and possibly captures many combinatorial aspects of the capacity region of the DM BC itself. Nevertheless, we extend that inner bound by adding binning to rate-splitting and superposition coding and provide an expression for the resulting inner bound also in explicit form as a union over coding distributions of four-dimensional polytopes in the message rates. 




The rest of this paper is organized as follows. In Section \ref{Sec:System}, we present the order theory notation and describe the system model. In Section \ref{Sec_Main_results}, we present the inner bound for the DM BC and its specialization to the general $K$-user combination network along with a proof of the converse, establishing the latter's capacity region 
for the diamond message set. In Section \ref{sec:binning}, we extend the inner bound for the DM BC of Section \ref{Sec_Main_results} based on rate-splitting and superposition coding by adding binning to it. While binning is not needed to achieve the capacity of the combination network it is an interesting open question as to whether the more general inner bound of Section \ref{sec:binning} would be optimal for BCs that are more general than the combination network such as the deterministic broadcast channel.
Finally, the paper is concluded in Section \ref{Sec_Conc}.  An outline of the Fourier-Motzkin Elimination used to obtain the inner bound for the DM BC of Section \ref{Sec_Main_results} in terms of explicit per-distribution polytopes is provided in Appendix \ref{Appendix_FME_TH1} and an outline of the proof of the more general inner bound resulting from the inclusion of binning is given in Appendix \ref{app:thmbinning}.

\section{Notation and System Model}
\label{Sec:System}

\subsection{Order Theory}
We find it convenient to introduce ideas from order theory following the notation in \cite{romero2016superposition} to enable the descriptions of the system model and the results. We consider the ground set to be an ordered set of subsets of receiver indices in $\{1,2, \cdots , K\} \triangleq [1:K]$. We assume the order to be that of set inclusion, i.e., $S\leq S^{'}$ if and only if $S \subseteq S^{'}$ while $S$ and $S^{'}$ are incomparable
if neither $S \subseteq S^{'} $ nor $S^{'} \subseteq S$. Let $\mathsf{P}$ be such an ordered set of sets and $\mathsf{Q}$ be a subset of $\mathsf{P}$. Note that the sans serif letter is used to represent a set of sets like $\mathsf{P},\mathsf{Q}$ and $\mathsf{E}$ to distinguish it from sets.  
We say that $\mathsf{Q}$ is an {\em up-set} if $S\in \mathsf{Q}$, $S^{'}\in \mathsf{P}$, and $S^{'} \geq S$ implies $S^{'}\in Q$ and a {\em down-set} if $S\in Q$, $S^{'}\in \mathsf{P}$, and $S^{'}\leq S$ implies $S^{'}\in \mathsf{Q}$.

Moreover, for any subset $\mathsf{Q}\subseteq \mathsf{P}$, we define the smallest {\em down-set} containing $\mathsf{Q}$ as $\downarrow_{\mathsf{P}} \mathsf{Q} = \{S^{'} \in \mathsf{P}: S^{'}\leq S, S \in \mathsf{Q}\}$ and the smallest {\em up-set} containing $\mathsf{Q}$ as $\uparrow_{\mathsf{P}} \mathsf{Q} = \{S^{'} \in \mathsf{P}: S\leq S^{'}, S \in \mathsf{Q}\}$. For the sake of simplicity, we abbreviate the set $\{i_1,i_2,..,i_N\} \subseteq \{1, \cdots , K\}$ for any positive number $N\leq K$ as $i_1i_2 \cdots i_N$, 
adopting the convention that $i_1 < i_2 < \cdots < i_N$. For instance, let $\mathsf{P}$ be the power set of $[1:3]$, i.e., $\mathsf{P} = \{\phi, 1,2,3,12,13,23,123\}$. Then, for instance, we have $\downarrow_{\mathsf{P}} \{13\}=\{\phi,1,3,13\}$ and $\uparrow_{\mathsf{P}} \{13\}= \{13,123\}$. In some cases, especially when the set $S=\{i_1,i_2,..,i_N\}$ has many elements, we find it more convenient to denote it by its complement, i.e., $\overline{S}=[1:K]\backslash S$. For example, the set $\overline{\{i_1\}}$, also denoted $\overline{i}_1 $, and is the set $ [1:K]\backslash \{i_1\}$.

Finally, for any $\mathsf{F}\subseteq \mathsf{P}$ and $i\in[1:K]$, we define $\mathsf{W}_i^{\mathsf{F}}$ as the set of sets in $\mathsf{F}$ containing $i$ so that $ \mathsf{W}_i^{\mathsf{F}} \triangleq \{S\in \mathsf{F}:i\in S\}$.

From the given notation, we can show that the following relationships are true:
\begin{enumerate}
\item For any set $S=i_1i_2\cdots i_N \subseteq \{1,2,\cdots ,K\}$, we have  
\begin{align}
\cup_{k\in S}\mathsf{W}_k^{\mathsf{P}}&= \uparrow_{\mathsf{P}}\{i_1,i_2,\cdots, i_N\}\label{Eq_Lemma_1}\\
\cap_{k\in S}\mathsf{W}_k^{\mathsf{P}}&= \uparrow_{\mathsf{P}}\{i_1i_2\cdots i_N\} \label{Eq_Lemma_2}
\end{align}
\item For any set $S=i_1i_2\cdots i_N \subset \{1,2,\cdots, K\}$ and any $i\in [1:K]$ 
\begin{align}
\downarrow_{\mathsf{W}_i^{\mathsf{P}}}\{\overline{i_1},\overline{i_2},\cdots, \overline{i_N}\} \cup \uparrow_{\mathsf{W}_i^\mathsf{P}}\{S\}&=\mathsf{W}_i^{\mathsf{P}} \label{Eq_Lemma_3}\\
\downarrow_{\mathsf{W}_i^{\mathsf{P}}}\{\overline{i_1},\overline{i_2},\cdots, \overline{i_N}\} \cap \uparrow_{\mathsf{W}_i^\mathsf{P}}\{S\}&=\phi \label{Eq_Lemma_4}\\
\downarrow_{\mathsf{W}_i^{\mathsf{P}}}\{\overline{S}\} \cup \uparrow_{\mathsf{W}_i^\mathsf{P}}\{i_1,i_2,\cdots ,i_N\}&=\mathsf{W}_i^{\mathsf{P}} \label{Eq_Lemma_5}\\
\downarrow_{\mathsf{W}_i^{\mathsf{P}}}\{\overline{S}\} \cap \uparrow_{\mathsf{W}_i^\mathsf{P}}\{i_1,i_2,\cdots ,i_N\}&=\phi
\label{Eq_Lemma_6}
\end{align}

\end{enumerate}

\subsection{System Model}
We consider a DM BC with the transmitter denoted by the transmitted symbol $X\in \mathcal{X}$, $K$ receivers denoted by their respective channel outputs $Y_i\in \mathcal{Y}_i$, and the channel transition probability $W(y_1y_2 \cdots y_K|x)$ where the conditional probability of $n$ channel outputs $Y_1^n, \cdots , Y_K^n $ with $Y_k^n \triangleq (Y_{k,1}, \cdots , Y_{k,n})$, conditioned on $n$ channel inputs $X^n \triangleq (X_{1}, \cdots , X_n)$  is given by $$p(y_1^n\cdots y_K^n|x^n)= \prod_{j=1}^nW(y_{1j}\cdots y_{Kj}|x_j)$$ where $X_j, Y_{1,j},\cdots Y_{K,j}$ are the channel input and outputs in the $j^{th}$ channel use. 

The message $M_{S} \in [1:2^{nR_S}] $ of rate $R_S$ is indexed by the subset $S\subseteq [1:K]$ of receivers it is intended for. Hence, $M_{\overline{S}}$ is the message intended for all receivers except the receivers in $\overline{S}$. The diamond message set consists of four messages, $M_{\overline{\phi}}$, a message intended for all receivers, $M_{\overline{K-1.K}}$ a message intended for the first $K-2$ receivers\footnote{The dot separating $K-1$ and $K$ in $M_{\overline{K-1.K}}$ is a slight abuse of notation since the subscript in $M_{\overline{K-1K}}$ can be confusing.}, and $M_{\overline{K-1}}$ and $M_{\overline{K}}$, two messages intended for all but receivers $K-1$ or $K$. Define $\mathsf{E}$ as the set of all message indices so that  $\mathsf{E}=\{\overline{\phi},\overline{K},\overline{K-1},\overline{K-1.K}\}$. Note that receiver $Y_K$ demands $M_{\overline{\phi}}$ and $M_{\overline{K-1}}$, receiver $Y_{K-1}$ demands $M_{\overline{\phi}}$ and $M_{\overline{K}}$, while the rest of receivers $\{Y_i\}_{i=1}^{K-2}$ demand all four messages. 

The combination network \cite{salimi2015generalized,romero2016superposition} is a special case of the general DM-BC. As shown in Fig. \ref{Fig_K3_combination networks} for the three-receiver case, it consists of three layers of nodes. The top layer and bottom layer
 consist of the single source node $X$ and $K$ receivers $\{Y_i\}_{i=1}^K$, respectively. While, the middle layer consists of $2^K-1$ intermediate nodes, denoted $V_{S}$ for all $S\in \mathsf{P}$ where $\mathsf{P}$ be the power set of $[1:K]$ excluding the empty set. The source is connected to each of the intermediate nodes $V_S$ through a noiseless link of capacity $C_S$ (per channel use). On the other hand, each receiver $Y_i$ is connected to $2^{K-1}$ intermediate nodes $V_{\mathsf{W}_i^{\mathsf{P}}} \triangleq \{V_S\}_{S \in \mathsf{W}_i^{\mathsf{P}} }$ via noiseless links of unlimited capacity. An interesting connection between the combination networks and the DM BC is revealed in \cite{romero2016superposition} wherein the authors considered the combination network to be a network of noiseless DM BCs with the channel input $X$ connected in different ways to the channel outputs $\{Y_{i}\}_{i=1}^K$ each through a noiseless BC. In particular, the channel input $X$ contains $2^{K}{-}1$ components $V_S$, for all $S \in  \mathsf{P}$. For each $S$, the component $V_S \in \mathcal{V}_S$, where $|\mathcal{V}_S|=2^{C_S}$, is noiselessly received at each receiver $Y_i$ for all $i \in S$ and {\em not} received at the receivers $Y_j$ with $j\not\in S $, i.e., $Y_i= V_{\mathsf{W}_i^{\mathsf{P}}} $. 

Denote the set of the four messages $\{M_S : S \in \mathsf{E} \} $ to be sent over a $K$-user DM BC as $ M_{\mathsf{E}}$. A $ (\{2^{nR_{S}}\}_{S \in \mathsf{E}}, n) $ code consists of (i) an encoder that assigns to each message tuple $ m_{\mathsf{E}} \in \prod_{S \in \mathsf{E}}[1:2^{nR_S}] $ a codeword $x^n( m_\mathsf{E})$ (ii) a decoder at each receiver, with the $k^{th}$ decoder mapping the received sequence $Y_{k}^n$ for each $k \in [1:K]$ into the set of decoded messages $ \{ \tilde{m}_{S} : S \in \mathsf{W}_k^{\mathsf{E}} \} \in \prod_{S \in \mathsf{W}_k^{\mathsf{E}}}[1:2^{nR_S}]$, denoted as $ \tilde{m}_{\mathsf{W}_k^{\mathsf{E}}}$. The probability of error $P_e^{(n)}$ is the probability that not all receivers decode their intended messages correctly. The rate tuple $( R_{S} : S\in \mathsf{E} )$ is said to be achievable if there exists a sequence of $ (\{2^{nR_{S}}\}_{S \in \mathsf{E}},n) $ codes with $P_e^{(n)} \rightarrow 0$ as $n  \rightarrow  \infty $. The closure of the union of achievable rates is the capacity region.

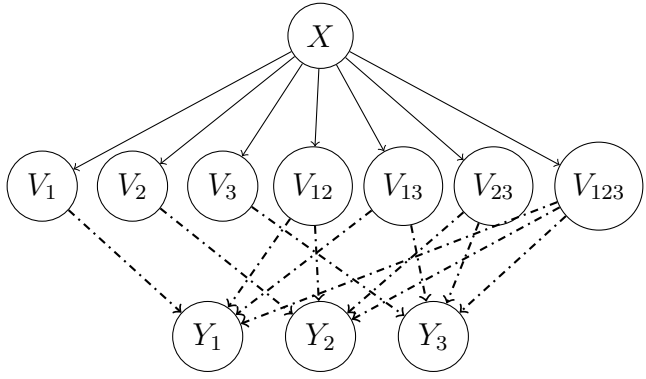
\begin{figure}
\centering
\begin{tikzpicture}



\node  (s) at (0,2.5)  [circle,draw] {$X$};
\node (v1) at (-3.7,0.5)   [circle,draw] {$V_{1}$};
\node (v2) at (-2.5,0.5)    [circle,draw] {$V_{2}$};
\node (v3) at (-1.3,0.5)    [circle,draw] {$V_{3}$};
\node (v12) at (-0.1,0.5)    [circle,draw] {$V_{12}$};
\node (v13) at (1.1,0.5)    [circle,draw] {$V_{13}$};
\node (v23) at (2.3,0.5)    [circle,draw] {$V_{23}$};
\node (v123) at (3.7,0.5)    [circle,draw] {$V_{123}$};

\node (y1) at (-1.5,-1.5)    [circle,draw] {$Y_1$};
\node (y2) at (0,-1.5)    [circle,draw] {$Y_2$};
\node (y3) at (1.5,-1.5)    [circle,draw] {$Y_3$};


\draw [->] (s) to  (v1);
\draw [->] (s) to  (v2);
\draw [->] (s) to  (v3);
\draw [->] (s) to  (v12);
\draw [->] (s) to  (v13);
\draw [->] (s) to  (v23);
\draw [->] (s) to  (v123);
\draw[thick,dash dot] [->] (v1) to  (y1);
\draw[thick,dash dot] [->] (v2) to  (y2);
\draw[thick,dash dot] [->] (v3) to  (y3);
\draw[thick,dash dot] [->] (v12) to  (y1);
\draw[thick,dash dot] [->] (v12) to  (y2);
\draw[thick,dash dot] [->] (v13) to  (y1);
\draw[thick,dash dot] [->] (v13) to  (y3);
\draw[thick,dash dot] [->] (v23) to  (y2);
\draw[thick,dash dot] [->] (v23) to  (y3);
\draw[thick,dash dot] [->] (v123) to  (y1);
\draw[thick,dash dot] [->] (v123) to  (y2);
\draw[thick,dash dot] [->] (v123) to  (y3);

\end{tikzpicture}
\caption{A combination network with $7$ intermediate nodes and three receivers. The dark/dashed lines represent finite/infinite capacity links, respectively. The capacity of the dark link connecting the node $X$ to the node $V_S$ is $C_S$ for each $S\in \mathsf{P}$. For brevity, the source/destination nodes are denoted by their transmitted/received symbols and the intermediate nodes by their output symbols. Encoders/decoders are not shown.
\label{Fig_K3_combination networks}
}
\end{figure}

\section{Results}
\label{Sec_Main_results}

In the following theorem, we present the inner bound of \cite[Theorem 1]{romero2016superposition} specialized to the diamond message set 
in the more explicit form of a union of four-dimensional 
polytopes.

\begin{theorem}
\label{Th_AchRegion_FourMsgs}
An inner bound of $K$-user DM BC for the diamond message set with $\mathsf{E}=\{\overline{\phi},\overline{K},\overline{K-1},\overline{K-1.K}  \}$ is the set of non-negative rate tuples ($R_{\overline{\phi}}, R_{\overline{K}},R_{\overline{K-1}},R_{\overline{K-1.K}} $) satisfying the following for all $j,j_1,j_2\in\{1,2,\cdots,K-2\}$
\begin{align}
&R_{\overline{\phi}}+ R_{\overline{K-1}}\leq I(U_{\overline{\phi}},U_{\overline{K-1}};Y_{K})
\label{Eq_Corollary_1}\\
&R_{\overline{\phi}}+ R_{\overline{K}}\leq I(U_{\overline{\phi}},U_{\overline{K}};Y_{K-1})
\label{Eq_Corollary_2}\\
&R_{\overline{\phi}}+ R_{\overline{K-1}}+ R_{\overline{K}} 
\leq 
I(U_{\overline{K}};Y_{K-1}|U_{\overline{\phi}}) \nonumber \\ 
& \hspace{3cm} +I(U_{\overline{\phi}},U_{\overline{K-1}};Y_{K})
\label{Eq_Corollary_3} 
\\
&R_{\overline{\phi}}+ R_{\overline{K-1}}+ R_{\overline{K}} 
\leq 
I(U_{\overline{K-1}};Y_{K}|U_{\overline{\phi}}) \nonumber \\ 
& \hspace{3cm} +I(U_{\overline{\phi}},U_{\overline{K}};Y_{K-1})
\label{Eq_Corollary_4} 
\\
&R_{\overline{\phi}}+ R_{\overline{K-1}}+ R_{\overline{K}} +R_{\overline{K-1.K}} 
\leq  
I(X;Y_j) 
\label{Eq_Corollary_5}
\\
&R_{\overline{\phi}}+ R_{\overline{K-1}}+ R_{\overline{K}} +R_{\overline{K-1.K}} 
\leq  
I(X;Y_j|U_{\overline{\phi}},U_{\overline{K-1}}) \nonumber \\ & \hspace{3cm} +I(U_{\overline{\phi}},U_{\overline{K-1}};Y_{K})
\label{Eq_Corollary_6} 
\\
&R_{\overline{\phi}}+ R_{\overline{K-1}}+ R_{\overline{K}} +R_{\overline{K-1.K}} 
\leq  
I(X;Y_j|U_{\overline{\phi}},U_{\overline{K}}) \nonumber \\ & \hspace{3cm} +I(U_{\overline{\phi}},U_{\overline{K}};Y_{K-1})
\label{Eq_Corollary_7} 
\\
&R_{\overline{\phi}}+ R_{\overline{K-1}}+ R_{\overline{K}} +R_{\overline{K-1.K}} 
\leq  
I(X;Y_j|U_{\overline{\phi}},U_{\overline{K-1}},U_{\overline{K}}) \nonumber \\ & \hspace{2.6cm} +I(U_{\overline{K}};Y_{K-1}|U_{\overline{\phi}}) +I(U_{\overline{\phi}},U_{\overline{K-1}};Y_{K})
\label{Eq_Corollary_8} 
\\
&R_{\overline{\phi}}+ R_{\overline{K-1}}+ R_{\overline{K}} +R_{\overline{K-1.K}} 
\leq  
I(X;Y_j|U_{\overline{\phi}},U_{\overline{K-1}},U_{\overline{K}}) \nonumber \\ & \hspace{2.6cm} +I(U_{\overline{K-1}};Y_{K}|U_{\overline{\phi}}) +I(U_{\overline{\phi}},U_{\overline{K}};Y_{K-1})
\label{Eq_Corollary_9} 
\\
&2R_{\overline{\phi}}+ R_{\overline{K-1}}+ R_{\overline{K}} +R_{\overline{K-1.K}} 
\leq  
I(X;Y_j|U_{\overline{\phi}},U_{\overline{K-1}},U_{\overline{K}}) \nonumber \\ & \hspace{2.6cm} +I(U_{\overline{\phi}},U_{\overline{K}};Y_{K-1})+I(U_{\overline{\phi}},U_{\overline{K-1}};Y_{K})
\label{Eq_Corollary_10} 
\\
&2R_{\overline{\phi}}+ 2R_{\overline{K-1}}+ 2R_{\overline{K}} +R_{\overline{K-1.K}} 
\leq  
I(X;Y_j|U_{\overline{\phi}}) \nonumber \\ & \hspace{2.6cm} +I(U_{\overline{\phi}},U_{\overline{K}};Y_{K-1})+I(U_{\overline{\phi}},U_{\overline{K-1}};Y_{K})
\label{Eq_Corollary_11} 
\\
&2R_{\overline{\phi}}+ 2R_{\overline{K-1}}+ 2R_{\overline{K}} +2R_{\overline{K-1.K}} 
\leq  
I(X;Y_{j_1}|U_{\overline{\phi}}) \nonumber \\ & \hspace{2.6cm}+   I(X;Y_{j_2}|U_{\overline{\phi}},U_{\overline{K-1}},U_{\overline{K}}) \nonumber \\ & \hspace{2.6cm} +I(U_{\overline{\phi}},U_{\overline{K}};Y_{K-1})+I(U_{\overline{\phi}},U_{\overline{K-1}};Y_{K})
\label{Eq_Corollary_12} 
\end{align}
for some joint distribution of the auxiliary and input random variables $(U_{\overline{\phi}}, U_{\overline{K}}, U_{\overline{K-1}},X)$ that is of the form
\begin{align}
    p(u_{\overline{\phi}}, u_{\overline{K}}, u_{\overline{K-1}},x) & = p(u_{\overline{\phi}}) \times p(u_{\overline{K}}|u_{\overline{\phi}}) \times p(u_{\overline{K-1}}|u_{\overline{\phi}}) \nonumber \\ & \quad \quad \quad \times p(x|u_{\overline{\phi}},u_{\overline{K}},u_{\overline{K-1}}) \label{pmfstructure}
    \end{align}
\end{theorem}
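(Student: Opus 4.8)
The plan is to obtain the twelve inequalities \eqref{Eq_Corollary_1}--\eqref{Eq_Corollary_12} by instantiating the rate-splitting and superposition-coding inner bound of \cite[Theorem 1]{romero2016superposition} for the four-element index set $\mathsf{E}=\{\overline{\phi},\overline{K},\overline{K-1},\overline{K-1.K}\}$ and then eliminating the auxiliary split-rate variables by Fourier--Motzkin elimination (FME). First I would make the up-set rate split explicit. Since the Hasse diagram of $\mathsf{E}$ under subset inclusion is the diamond with maximum $\overline{\phi}$, middle elements $\overline{K-1},\overline{K}$, and minimum $\overline{K-1.K}$, the message $M_{\overline{\phi}}$ is not split, $M_{\overline{K-1}}$ splits into parts carried at levels $\overline{K-1}$ and $\overline{\phi}$, $M_{\overline{K}}$ into parts at $\overline{K}$ and $\overline{\phi}$, and $M_{\overline{K-1.K}}$ into four parts at all four levels. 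Collecting the sub-messages that target a common level produces the reconstructed-message rates $\tilde R_{\overline{\phi}},\tilde R_{\overline{K-1}},\tilde R_{\overline{K}},\tilde R_{\overline{K-1.K}}$ carried respectively by $U_{\overline{\phi}},U_{\overline{K-1}},U_{\overline{K}},X$, which yields eight split-rate variables subject to three rate-conservation equalities and nonnegativity. The superposition order of the diamond is exactly what forces the coding distribution \eqref{pmfstructure}, with $U_{\overline{K-1}}$ and $U_{\overline{K}}$ conditionally independent given the cloud center $U_{\overline{\phi}}$.

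Next I would record the joint-decoding constraints per receiver. Receiver $Y_K$ decodes $(U_{\overline{\phi}},U_{\overline{K-1}})$ and receiver $Y_{K-1}$ decodes $(U_{\overline{\phi}},U_{\overline{K}})$, each contributing the two standard superposition bounds: a ``top-layer-given-cloud-center'' bound such as $\tilde R_{\overline{K-1}}\le I(U_{\overline{K-1}};Y_K\mid U_{\overline{\phi}})$ and a ``total'' bound such as $\tilde R_{\overline{\phi}}+\tilde R_{\overline{K-1}}\le I(U_{\overline{\phi}},U_{\overline{K-1}};Y_K)$. Each receiver $Y_j$ with $j\in\{1,\dots,K-2\}$ decodes all four layers and contributes one bound for each of the five down-sets of $\mathsf{E}$ that contain $\overline{K-1.K}$; these are the possible error patterns in the superposition order, and their right-hand sides are $I(X;Y_j\mid\cdot)$ conditioned on the complementary (correctly decoded) layers. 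This gives a finite per-distribution polytope, with $5+2+2$ decoding inequalities (the receiver-$Y_j$ family appearing for each $j$), in the eight split rates and four message rates.

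The core of the argument is the FME of the eight split rates. I expect each surviving inequality to arise by adding a decoding bound at $Y_j$ to compatible ``total'' bounds at $Y_K$ and/or $Y_{K-1}$ so that every split rate moved ``upward'' either cancels or is dominated by its own nonnegativity, leaving on the left only an integer combination of the four genuine message rates. For instance, \eqref{Eq_Corollary_3} is the sum of $\tilde R_{\overline{K}}\le I(U_{\overline{K}};Y_{K-1}\mid U_{\overline{\phi}})$ and $\tilde R_{\overline{\phi}}+\tilde R_{\overline{K-1}}\le I(U_{\overline{\phi}},U_{\overline{K-1}};Y_K)$, using $\tilde R_{\overline{\phi}}+\tilde R_{\overline{K-1}}+\tilde R_{\overline{K}}\ge R_{\overline{\phi}}+R_{\overline{K-1}}+R_{\overline{K}}$, which holds because the reconstructed rates absorb the upward-split parts of $M_{\overline{K-1.K}}$. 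The same mechanism explains \eqref{Eq_Corollary_6}--\eqref{Eq_Corollary_9} (a single-receiver ``$X$ given a middle layer'' bound paired with the matching total bound at the crossed receiver) and \eqref{Eq_Corollary_10}--\eqref{Eq_Corollary_12}, where the coefficient $2$ on $R_{\overline{\phi}}$ appears precisely because $U_{\overline{\phi}}$ is counted in the total bounds of both $Y_K$ and $Y_{K-1}$; the order-theoretic bookkeeping lets me read the right-hand sides off directly from which filter and which totals were combined.

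The main obstacle I anticipate is the combinatorial blow-up intrinsic to FME: each elimination step can spawn many new inequalities, the overwhelming majority of which are redundant, and the real work is to prune these and certify that exactly the twelve listed families (over the free indices $j,j_1,j_2$) describe the projected polytope. I would eliminate the split rates in an order that respects the diamond---first the parts $r_{\overline{K-1.K},\,\cdot}$ of the bottom message, then the middle-level splits, then the parts carried up to level $\overline{\phi}$---discarding at each stage any inequality dominated by nonnegativity or by an already-derived bound, and finally verifying that no genuinely new facet survives beyond \eqref{Eq_Corollary_1}--\eqref{Eq_Corollary_12}. The detailed elimination, together with the indeterminate-size accounting that makes it valid despite the general $K$, is carried out in Appendix~\ref{Appendix_FME_TH1}.
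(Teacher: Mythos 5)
Your proposal is correct and follows essentially the same route as the paper: up-set rate splitting over the diamond, superposition coding with the distribution \eqref{pmfstructure}, the identical per-receiver joint-decoding constraints (five bounds at each receiver $Y_j$, $j\le K-2$, indexed by the down-sets of $\mathsf{E}$ containing $\overline{K-1.K}$, and two standard superposition bounds at each of $Y_{K-1}$ and $Y_K$), followed by Fourier--Motzkin elimination of the split rates with redundancy pruning, deferring the heavy bookkeeping to Appendix \ref{Appendix_FME_TH1}. The only substantive deviation is your proposed elimination order (bottom-message splits $R_{\overline{K-1.K}\rightarrow\cdot}$ first); the paper instead eliminates $R_{\overline{K-1}\rightarrow\overline{\phi}}$ and $R_{\overline{K}\rightarrow\overline{\phi}}$ first and $R_{\overline{K-1.K}\rightarrow\overline{\phi}}$ last, precisely to preserve the $K\leftrightarrow (K-1)$ symmetry of the intermediate polytopes and keep the procedure tractable --- a practical rather than logical distinction, since any order projects onto the same four-dimensional polytope.
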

\begin{proof} 
We provide only an outline for the rate-splitting and superposition coding scheme of \cite[Theorem 1]{romero2016superposition}.
First, up-set rate splitting is used, i.e., 
each message $M_S$ is divided into a collection of sub-messages $ M_{S\rightarrow S'}$ where $S' \in \uparrow_{ \mathsf{E}} S$. Then, the sub-message $M_{S\rightarrow S^{'}}$ will be treated as if it was intended for the receivers in the larger set $S^{'}$ instead of in $S$. Hence, the messages $M_{\overline{K-1.K}},M_{\overline{K}}$ and $ M_{\overline{K-1}}$ are split into $(M_{\overline{K-1.K}\rightarrow\overline{\phi}  },M_{\overline{K-1.K}\rightarrow \overline{K}},M_{\overline{K-1.K}\rightarrow \overline{K-1}},M_{\overline{K-1.K} \rightarrow\overline{K-1.K}} )$, $(M_{\overline{K}\rightarrow \overline{\phi}}, M_{\overline{K}\rightarrow \overline{K}})$, $(M_{\overline{K-1} \rightarrow \overline{\phi}},M_{\overline{K-1} \rightarrow \overline{K-1}} )$, with rates $(R_{\overline{K-1.K}\rightarrow\overline{\phi}  },
R_{\overline{K-1.K}\rightarrow \overline{K}},
R_{\overline{K-1.K}\rightarrow \overline{K-1}},
R_{\overline{K-1.K} \rightarrow\overline{K-1.K}})$, $(R_{\overline{K}\rightarrow \overline{\phi}},
R_{\overline{K}\rightarrow \overline{K}})$, and 
$(R_{\overline{K-1} \rightarrow \overline{\phi}},
R_{\overline{K-1} \rightarrow \overline{K-1}} )$, respectively. The common message and a sub-message of each of the other three messages of the form $M_{S \rightarrow \bar{\phi}}$, that is, the reconstructed message $(M_{ \overline{\phi}},M_{\overline{K-1} \rightarrow \overline{\phi}}, M_{\overline{K} \rightarrow \overline{\phi}},M_{\overline{K-1.K} \rightarrow \overline{\phi}} )$ is represented by the cloud center $U_{\overline{\phi}}$. Conditionally independently (conditioned on $U_{\overline{\phi}}$), the reconstructed 
message pair $(M_{\overline{K-1}\rightarrow \overline{K-1}},M_{\overline{K-1.K}\rightarrow \overline{K-1}}  )$ is represented by $U_{\overline{K-1}}$ and $(M_{\overline{K}\rightarrow \overline{K}},M_{\overline{K-1.K} \rightarrow \overline{K}})$ is represented by $U_{\overline{K}}$. Finally, $M_{\overline{K-1.K}\rightarrow \overline{K-1.K}}$ is represented by $U_{\overline{K-1.K}}=X$. 
Receivers $\{Y_i\}_{i=1}^{K-2}$ decode the four intended messages by the joint unique decoding of $X$ (and hence $U_{\overline{\phi}},U_{\overline{K}},U_{\overline{K-1}},U_{\overline{K-1.K}}$),
and this happens successfully as long as
\begin{align}
    &R_{\overline{\phi}}+R_{\overline{K-1}}+R_{\overline{K}}+R_{\overline{K-1.K}} \leq I(X;Y_j) 
    \label{AchRegion_FME_0_j_1}\\
    &R_{\overline{K-1}}+R_{\overline{K}}+R_{\overline{K-1.K}} \nonumber \\
    & \quad -R_{\overline{K-1} \rightarrow \overline{\phi}}-R_{\overline{K}\rightarrow \overline{\phi}}-R_{\overline{K-1.K}\rightarrow \overline{\phi}} \leq I(X;Y_j|U_{\overline{\phi}})
    \label{AchRegion_FME_0_j_2}\\
    &R_{\overline{K}}+R_{\overline{K-1.K}}-R_{\overline{K} \rightarrow \overline{\phi}} \nonumber \\
    & \quad -R_{\overline{K-1.K}\rightarrow \overline{K-1}}-R_{\overline{K-1.K}\rightarrow \overline{\phi}} \leq I(X;Y_j|U_{\overline{\phi}},U_{\overline{K-1}})
    \label{AchRegion_FME_0_j_3}\\
    &R_{\overline{K-1}}+R_{\overline{K-1.K}}-R_{\overline{K-1} \rightarrow \overline{\phi}} \nonumber \\
    & \quad -R_{\overline{K-1.K}\rightarrow \overline{K}}-R_{\overline{K-1.K}\rightarrow \overline{\phi}} \leq I(X;Y_j|U_{\overline{\phi}},U_{\overline{K}})
    \label{AchRegion_FME_0_j_4}\\
    &R_{\overline{K-1.K}} -R_{\overline{K-1.K} \rightarrow \overline{K-1}}\nonumber \\
    & \quad -R_{\overline{K-1.K}\rightarrow \overline{K}}-R_{\overline{K-1.K}\rightarrow \overline{\phi}} \leq I(X;Y_j|U_{\overline{\phi}},U_{\overline{K-1}},U_{\overline{K}})
    \label{AchRegion_FME_0_j_5}
\end{align} for all $j\in \{1,2,\cdots,K-2\}$.


On the other hand, receiver $Y_{K-1}$ finds the two intended messages $M_{\overline{\phi}}$ and $ M_{\overline{K}}$ by decoding $(U_{\overline{\phi}},U_{\overline{K}})$, and receiver $Y_{K}$ find the two intended messages $M_{\overline{\phi}}$ and $M_{\overline{K-1}}$ by decoding  $(U_{\overline{\phi}},U_{\overline{K-1}})$. Receivers $Y_{K-1}$ and $Y_{K}$ decode their intended pair of messages successfully provided the following inequalities hold:
\begin{align}
    &R_{\overline{\phi}}+R_{\overline{K}}
     +R_{\overline{K-1} \rightarrow \overline{\phi} } \nonumber \\
    &+R_{\overline{K-1.K}\rightarrow \overline{K} }+R_{\overline{K-1.K} \rightarrow \overline{\phi} }\leq I(U_{\overline{\phi}},U_{\overline{K}};Y_{K-1})
    \label{AchRegion_FME_0_K-1_1}\\
    &R_{\overline{K}}-R_{\overline{K}\rightarrow \overline{\phi}} +R_{\overline{K-1.K}\rightarrow \overline{K}}\leq I(U_{\overline{K}};Y_{K-1}|U_{\overline{\phi}})
    \label{AchRegion_FME_0_K-1_2}\\
    &R_{\overline{\phi}}+R_{\overline{K-1}}
     +R_{\overline{K} \rightarrow \overline{\phi} } \nonumber \\
    &+R_{\overline{K-1.K}\rightarrow \overline{K-1} }+R_{\overline{K-1.K} \rightarrow \overline{\phi} }\leq I(U_{\overline{\phi}},U_{\overline{K-1}};Y_{K})
    \label{AchRegion_FME_0_K_1}\\
    &R_{\overline{K-1}}-R_{\overline{K-1}\rightarrow \overline{\phi}} +R_{\overline{K-1.K}\rightarrow \overline{K-1}}\leq I(U_{\overline{K-1}};Y_{K}|U_{\overline{\phi}})
    \label{AchRegion_FME_0_K_2}
\end{align}
Besides inequalities \eqref{AchRegion_FME_0_j_1}-\eqref{AchRegion_FME_0_K_2} there are eight inequalities to account for the non-negativity of the split rates given in \eqref{AchRegion_FME_0_positiveSplit_1}-\eqref{AchRegion_FME_0_positiveSplit_8} in Appendix \ref{Appendix_FME_TH1}. Taken together, these inequalities describe a nine-dimensional polytope (per admissible distribution of \eqref{pmfstructure}) in original- and split-rates space. By projecting away the five split rates 
using a structured form of the FME method described in Appendix \ref{Appendix_FME_TH1}, we get the achievable region in the positive orthant defined by the inequalities \eqref{Eq_Corollary_1}-\eqref{Eq_Corollary_12}. A high-level description of the issues encountered in performing the FME procedure is given in Remark \ref{rem:fme}. The four successively reduced-dimensional intermediate polytopes (of dimensions eight down to five) obtained after projecting out each split rate, in a certain desirable order, after the elimination of the associated redundant inequalities in this rather large FME procedure, are given in sequence in Appendix \ref{Appendix_FME_TH1}. That order permits a relatively efficient completion of the projection of all five split rates.
\end{proof}

\begin{remark}
\label{rem:fme}
Note that since $K$ is arbitrary, there are an indeterminate number of inequalities that describe the polyhedral achievable rate region per random coding distribution in nine-dimensional original- and split-rate space. In particular, including the non-negativity of split rates, there are $5K+2$ inequalities given by \eqref{AchRegion_FME_0_j_1}--\eqref{AchRegion_FME_0_K_2} and \eqref{AchRegion_FME_0_positiveSplit_1}-\eqref{AchRegion_FME_0_positiveSplit_8} that describe each such polytope. This means that the FME procedure must deal with groups of inequalities (for general $K$) at a time instead of exhaustively listing them 
(for this reason, the FME software of \cite{gattegno2016fourier} cannot be directly used).
More importantly, the elimination of the five split rates presents $5!=120$ possibilities for the orders in which to eliminate them, all of which must lead, in principle, to the right four-dimensional polytope described by the inequalities \eqref{Eq_Corollary_1}-\eqref{Eq_Corollary_12}. It turns out however, that at least some of these orders render the FME procedure too tedious to perform beyond even the second or third steps of the five-step process, including the identification (and hence, elimination) of the redundant inequalities that arise at each step. Appendix \ref{Appendix_FME_TH1} gives a ``good" order of elimination and the intermediate results obtained after projecting out each of the five split rates in that order that not only makes it possible to complete the five-step task with reasonable effort 
but also offers the possibility of verifying if two of the successively reduced-dimensional polytopes at the end of steps 2 and 4 satisfy certain symmetry conditions which, intuitively, they must (and which they do). 
\end{remark}



We next state the inner bound of Theorem \ref{Th_AchRegion_FourMsgs} when it is applied to the combination network by setting $X=V_{\mathsf{P}}$ and $Y_j = V_{\mathsf{W}_j^{\mathsf{P}}}$. The mutual information terms in the bounds are written as the difference between the entropy and conditional entropy so that we may get insights into optimizing the rate region over the coding distributions.

\begin{corollary}
\label{cor-comb}
An inner bound of $K$-user combination networks $\mathsf{E}=\{\overline{\phi},\overline{K},\overline{K-1},\overline{K-1.K}  \}$ is the set of non-negative rate tuples ($R_{\overline{\phi}}, R_{\overline{K}},R_{\overline{K-1}},R_{\overline{K-1.K}} $) satisfying the following for all $j,j_1,j_2\in\{1,2,\cdots,K-2\}$
\begin{align}
&R_{\overline{\phi}}+ R_{\overline{K-1}}
\leq 
H(V_{\mathsf{W}_K^{\mathsf{P}}}) -H(V_{\mathsf{W}_K^{\mathsf{P}}}|U_{\overline{\phi}},U_{\overline{K-1}})
\label{Eq_combination_network_inner_bound_H_1}\\
&R_{\overline{\phi}}+ R_{\overline{K}}
\leq 
H(V_{\mathsf{W}_{K-1}^{\mathsf{P}}}) -H(V_{\mathsf{W}_{K-1}^{\mathsf{P}}}|U_{\overline{\phi}},U_{\overline{K}})
\label{Eq_combination_network_inner_bound_H_2}\\
&R_{\overline{\phi}}+ R_{\overline{K-1}}+ R_{\overline{K}} 
\leq 
H(V_{\mathsf{W}_{K-1}^{\mathsf{P}}}|U_{\overline{\phi}}) -H(V_{\mathsf{W}_{K-1}^{\mathsf{P}}}|U_{\overline{\phi}},U_{\overline{K}})
 \nonumber \\ 
& \qquad \qquad \qquad \quad
+H(V_{\mathsf{W}_K^{\mathsf{P}}}) -H(V_{\mathsf{W}_K^{\mathsf{P}}}|U_{\overline{\phi}},U_{\overline{K-1}}) 
\label{Eq_combination_network_inner_bound_H_3} 
\\
&R_{\overline{\phi}}+ R_{\overline{K-1}}+ R_{\overline{K}} 
\leq 
H(V_{\mathsf{W}_K^{\mathsf{P}}}|U_{\overline{\phi}}) -H(V_{\mathsf{W}_K^{\mathsf{P}}}|U_{\overline{\phi}},U_{\overline{K-1}}) 
\nonumber \\ 
& \qquad \qquad \qquad \quad
+H(V_{\mathsf{W}_{K-1}^{\mathsf{P}}}) -H(V_{\mathsf{W}_{K-1}^{\mathsf{P}}}|U_{\overline{\phi}},U_{\overline{K}})
\label{Eq_combination_network_inner_bound_H_4} 
\\
&R_{\overline{\phi}}+ R_{\overline{K-1}}+ R_{\overline{K}} +R_{\overline{K-1.K}}
\leq  
H(V_{\mathsf{W}_j^{\mathsf{P}}})
\label{Eq_combination_network_inner_bound_H_5}
\\
&R_{\overline{\phi}}+ R_{\overline{K-1}}+ R_{\overline{K}} +R_{\overline{K-1.K}}
\leq  
H(V_{\mathsf{W}_j^{\mathsf{P}}}|U_{\overline{\phi}},U_{\overline{K-1}}) \nonumber \\ & \qquad \qquad \qquad \quad
+H(V_{\mathsf{W}_K^{\mathsf{P}}}) -H(V_{\mathsf{W}_K^{\mathsf{P}}}|U_{\overline{\phi}},U_{\overline{K-1}}) 
\label{Eq_combination_network_inner_bound_H_6} 
\\
&R_{\overline{\phi}}+ R_{\overline{K-1}}+ R_{\overline{K}} +R_{\overline{K-1.K}} 
\leq  
H(V_{\mathsf{W}_j^{\mathsf{P}}}|U_{\overline{\phi}},U_{\overline{K}}) \nonumber \\ & \qquad \qquad \qquad \quad
+H(V_{\mathsf{W}_{K-1}^{\mathsf{P}}}) -H(V_{\mathsf{W}_{K-1}^{\mathsf{P}}}|U_{\overline{\phi}},U_{\overline{K}})
\label{Eq_combination_network_inner_bound_H_7} 
\\
&R_{\overline{\phi}}+ R_{\overline{K-1}}+ R_{\overline{K}} +R_{\overline{K-1.K}}
\leq  
H(V_{\mathsf{W}_j^{\mathsf{P}}}|U_{\overline{\phi}},U_{\overline{K-1}},U_{\overline{K}}) 
\nonumber \\ & \qquad \qquad \qquad \quad
+H(V_{\mathsf{W}_{K-1}^{\mathsf{P}}}|U_{\overline{\phi}}) -H(V_{\mathsf{W}_{K-1}^{\mathsf{P}}}|U_{\overline{\phi}},U_{\overline{K}})
 \nonumber \\ 
& \qquad \qquad \qquad \quad
+H(V_{\mathsf{W}_K^{\mathsf{P}}}) -H(V_{\mathsf{W}_K^{\mathsf{P}}}|U_{\overline{\phi}},U_{\overline{K-1}})
\label{Eq_combination_network_inner_bound_H_8} 
\\
&R_{\overline{\phi}}+ R_{\overline{K-1}}+ R_{\overline{K}} +R_{\overline{K-1.K}}
\leq  
H(V_{\mathsf{W}_j^{\mathsf{P}}}|U_{\overline{\phi}},U_{\overline{K-1}},U_{\overline{K}}) 
\nonumber \\ & \qquad \qquad \qquad \quad
+H(V_{\mathsf{W}_K^{\mathsf{P}}}|U_{\overline{\phi}}) -H(V_{\mathsf{W}_K^{\mathsf{P}}}|U_{\overline{\phi}},U_{\overline{K-1}}) 
\nonumber \\ 
& \qquad \qquad \qquad \quad
+H(V_{\mathsf{W}_{K-1}^{\mathsf{P}}}) -H(V_{\mathsf{W}_{K-1}^{\mathsf{P}}}|U_{\overline{\phi}},U_{\overline{K}})
\label{Eq_combination_network_inner_bound_H_9} 
\\
&2R_{\overline{\phi}}+ R_{\overline{K-1}}+ R_{\overline{K}} +R_{\overline{K-1.K}} 
\leq  
H(V_{\mathsf{W}_j^{\mathsf{P}}}|U_{\overline{\phi}},U_{\overline{K-1}},U_{\overline{K}}) 
\nonumber \\ & \qquad \qquad \qquad \quad 
+H(V_{\mathsf{W}_K^{\mathsf{P}}}) -H(V_{\mathsf{W}_K^{\mathsf{P}}}|U_{\overline{\phi}},U_{\overline{K-1}}) 
\nonumber \\ & \qquad \qquad \qquad \quad 
+H(V_{\mathsf{W}_{K-1}^{\mathsf{P}}}) -H(V_{\mathsf{W}_{K-1}^{\mathsf{P}}}|U_{\overline{\phi}},U_{\overline{K}})
\label{Eq_combination_network_inner_bound_H_10} 
\\
&2R_{\overline{\phi}}+ 2R_{\overline{K-1}}+ 2R_{\overline{K}} +R_{\overline{K-1.K}} 
\leq  
H(V_{\mathsf{W}_j^{\mathsf{P}}}|U_{\overline{\phi}}) 
\nonumber \\ & \qquad \qquad \qquad \quad
+H(V_{\mathsf{W}_K^{\mathsf{P}}}) -H(V_{\mathsf{W}_K^{\mathsf{P}}}|U_{\overline{\phi}},U_{\overline{K-1}}) 
\nonumber \\ & \qquad \qquad \qquad \quad
+H(V_{\mathsf{W}_{K-1}^{\mathsf{P}}}) -H(V_{\mathsf{W}_{K-1}^{\mathsf{P}}}|U_{\overline{\phi}},U_{\overline{K}})
\label{Eq_combination_network_inner_bound_H_11} 
\\
&2R_{\overline{\phi}}+ 2R_{\overline{K-1}}+ 2R_{\overline{K}} +2R_{\overline{K-1.K}} 
\leq  
H(V_{\mathsf{W}_{j_1}^{\mathsf{P}}}|U_{\overline{\phi}})  
\nonumber \\ & \qquad \qquad \qquad \quad
+H(V_{\mathsf{W}_{j_2}^{\mathsf{P}}}|U_{\overline{\phi}},U_{\overline{K-1}},U_{\overline{K}}) 
\nonumber \\ & \qquad \qquad \qquad \quad
+H(V_{\mathsf{W}_K^{\mathsf{P}}}) -H(V_{\mathsf{W}_K^{\mathsf{P}}}|U_{\overline{\phi}},U_{\overline{K-1}}) 
\nonumber \\ & \qquad \qquad \qquad \quad
+H(V_{\mathsf{W}_{K-1}^{\mathsf{P}}}) -H(V_{\mathsf{W}_{K-1}^{\mathsf{P}}}|U_{\overline{\phi}},U_{\overline{K}})
\label{Eq_combination_network_inner_bound_H_12} 
\end{align}
for some joint distribution of the auxiliary and input random variables $(U_{\overline{\phi}}, U_{\overline{K}}, U_{\overline{K-1}},X=V_{\mathsf{P}})$ that is of the form
\begin{align}
    p(u_{\overline{\phi}}, u_{\overline{K}}, u_{\overline{K-1}},x) & = p(u_{\overline{\phi}}) \times p(u_{\overline{K}}|u_{\overline{\phi}}) \times p(u_{\overline{K-1}}|u_{\overline{\phi}}) \nonumber \\ & \quad \quad \quad \times p(x|u_{\overline{\phi}},u_{\overline{K}},u_{\overline{K-1}}) \label{pmfstructure_repeat}
    \end{align}
\end{corollary}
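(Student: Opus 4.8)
The plan is to specialize Theorem \ref{Th_AchRegion_FourMsgs} to the combination network directly, using the identification of the combination network as a deterministic DM BC established in Section \ref{Sec:System}. Concretely, I would set $X = V_{\mathsf{P}}$, the tuple of all $2^K-1$ intermediate-node symbols, and $Y_i = V_{\mathsf{W}_i^{\mathsf{P}}}$, the sub-tuple of those symbols received by receiver $i$, for each $i \in [1:K]$. Since the theorem holds for every DM BC and the combination network is one such channel (with the input--output law being the noiseless projection $Y_i = V_{\mathsf{W}_i^{\mathsf{P}}}$), every inequality \eqref{Eq_Corollary_1}--\eqref{Eq_Corollary_12} remains valid after this substitution, and the admissible coding distributions are exactly those of the form \eqref{pmfstructure} with $X$ replaced by $V_{\mathsf{P}}$, i.e., \eqref{pmfstructure_repeat}. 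It therefore only remains to rewrite each mutual-information term as a difference of (conditional) entropies in the form displayed in \eqref{Eq_combination_network_inner_bound_H_1}--\eqref{Eq_combination_network_inner_bound_H_12}.

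The key structural fact I would invoke is that the combination network is \emph{deterministic}: each output $Y_i = V_{\mathsf{W}_i^{\mathsf{P}}}$ is a deterministic (coordinate-projection) function of the input $X = V_{\mathsf{P}}$. Hence $H(Y_i \mid X, \cdot) = H(V_{\mathsf{W}_i^{\mathsf{P}}} \mid V_{\mathsf{P}}, \cdot) = 0$ for any conditioning. With this in hand, the term-by-term translation splits into two routine cases. For mutual-information terms not involving $X$ as a ``sender,'' such as $I(U_{\overline{\phi}}, U_{\overline{K-1}}; Y_K)$ and $I(U_{\overline{K}}; Y_{K-1}\mid U_{\overline{\phi}})$, I would simply expand via $I(A; B\mid C) = H(B\mid C) - H(B\mid A, C)$, giving $H(V_{\mathsf{W}_K^{\mathsf{P}}}) - H(V_{\mathsf{W}_K^{\mathsf{P}}}\mid U_{\overline{\phi}}, U_{\overline{K-1}})$ and $H(V_{\mathsf{W}_{K-1}^{\mathsf{P}}}\mid U_{\overline{\phi}}) - H(V_{\mathsf{W}_{K-1}^{\mathsf{P}}}\mid U_{\overline{\phi}}, U_{\overline{K}})$, respectively. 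For terms of the form $I(X; Y_j\mid U_B)$, where $U_B$ denotes whatever auxiliaries are being conditioned on, the same expansion gives $H(V_{\mathsf{W}_j^{\mathsf{P}}}\mid U_B) - H(V_{\mathsf{W}_j^{\mathsf{P}}}\mid V_{\mathsf{P}}, U_B)$, and the determinism collapses the second entropy to zero, leaving exactly the single term $H(V_{\mathsf{W}_j^{\mathsf{P}}}\mid U_B)$ that appears in \eqref{Eq_combination_network_inner_bound_H_5}--\eqref{Eq_combination_network_inner_bound_H_12}.

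Carrying out this substitution inequality by inequality yields the one-to-one correspondence \eqref{Eq_Corollary_1}$\to$\eqref{Eq_combination_network_inner_bound_H_1}, up through \eqref{Eq_Corollary_12}$\to$\eqref{Eq_combination_network_inner_bound_H_12}, which is precisely the claimed region. I do not anticipate a genuine obstacle here, as the argument is mechanical; the only point requiring care is bookkeeping---ensuring that each occurrence of $I(X; Y_j\mid \cdot)$ is recognized and simplified using $H(Y_j\mid X)=0$ while the remaining mutual-information terms are left as honest entropy differences. The paper deliberately retains them in the $H(\cdot)-H(\cdot\mid\cdot)$ form, rather than collapsing, to expose the structure needed later for optimizing over the coding distribution and for matching the generalized cut-set converse. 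One should also confirm that the factorization constraints in \eqref{pmfstructure} are inherited unchanged under $X=V_{\mathsf{P}}$, which is immediate since the auxiliaries $U_{\overline{\phi}}, U_{\overline{K}}, U_{\overline{K-1}}$ and their conditional structure are untouched by the specialization.
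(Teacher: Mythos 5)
Your proposal is correct and follows the same route the paper takes: Corollary \ref{cor-comb} is obtained by substituting $X=V_{\mathsf{P}}$ and $Y_i=V_{\mathsf{W}_i^{\mathsf{P}}}$ into Theorem \ref{Th_AchRegion_FourMsgs}, expanding each mutual information as $H(\cdot\mid\cdot)-H(\cdot\mid\cdot,\cdot)$, and using the determinism $H(Y_j\mid X,\cdot)=0$ to collapse every term of the form $I(X;Y_j\mid U_B)$ to $H(V_{\mathsf{W}_j^{\mathsf{P}}}\mid U_B)$ while leaving the receiver-$(K-1)$ and receiver-$K$ terms as genuine entropy differences. Your term-by-term case split and the observation that the factorization \eqref{pmfstructure} carries over unchanged are exactly the (implicit) argument in the paper.
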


In the following theorem, we show that the inner bound in Theorem \ref{Th_AchRegion_FourMsgs} achieves the capacity region of the general combination network by identifying the optimal distribution for the auxiliary random variables $U_{\overline{\phi}},U_{\overline{K-1}},U_{\overline{K}}$ and the channel input components $\{V_{S}\}_{S\in \mathsf{P}}$. The optimality of that distribution is proved indirectly by showing that the resulting inequalities are particular members of the generic family of generalized cut-set (outer) bounds found in \cite{salimi2015generalized}.
\begin{theorem}
\label{Th_Capacity_combination_networks}
The capacity region of the $K$-user combination network for the diamond message set with $\mathsf{E}=\{M_{\overline{\phi}},M_{\overline{K}},M_{\overline{K-1}},M_{\overline{K-1.K}}  \}$ is the set of non-negative rate tuples ($R_{\overline{\phi}}, R_{\overline{K}},R_{\overline{K-1}},R_{\overline{K-1.K}} $) satisfying for all $j\in \{1,2,\cdots,K-2\}$
\begin{align}
&R_{\overline{\phi}}+ R_{\overline{K-1}}\leq C_{\mathsf{W}_{K}^{\mathsf{P}}}
\label{Eq_Capacity_CN_1}\\
&R_{\overline{\phi}}+ R_{\overline{K}}\leq C_{\mathsf{W}_{K-1}^{\mathsf{P}}}
\label{Eq_Capacity_CN_2}\\
&R_{\overline{\phi}}+ R_{\overline{K-1}}+ R_{\overline{K}} 
\leq 
C_{  \downarrow_{\mathsf{W}_{K-1}^{\mathsf{P}}} \{\overline{K}\} }
+ C_{\mathsf{W}_{K}^{\mathsf{P}}}
\label{Eq_Capacity_CN_3} 
\\
&R_{\overline{\phi}}+ R_{\overline{K-1}}+ R_{\overline{K}} +R_{\overline{K-1.K}} 
\leq  
C_{\mathsf{W}_j^{\mathsf{P}}}
\label{Eq_Capacity_CN_4}
\\
&2R_{\overline{\phi}}+ R_{\overline{K-1}}+ R_{\overline{K}} +R_{\overline{K-1.K}} 
\leq  \nonumber 
\\ & \qquad C_{  \downarrow_{\mathsf{W}_j^{\mathsf{P}}} \{\overline{K-1.K}\} } 
+ C_{\mathsf{W}_{K-1}^{\mathsf{P}}}
+ C_{\mathsf{W}_{K}^{\mathsf{P}}}
\label{Eq_Capacity_CN_5} 
\\
&2R_{\overline{\phi}}+ 2R_{\overline{K-1}}+ 2R_{\overline{K}} +R_{\overline{K-1.K}} 
\leq  \nonumber 
\\ & \qquad C_{  \downarrow_{\mathsf{W}_j^{\mathsf{P}}} \{\overline{K},\overline{K-1} \} } 
+ C_{\mathsf{W}_{K-1}^{\mathsf{P}}} 
+ C_{\mathsf{W}_{K}^{\mathsf{P}}}
\label{Eq_Capacity_CN_6} 
\end{align}  
where $C_{\mathsf{W}}=\sum_{S\in \mathsf{W}} C_S$ for any $ \mathsf{W} \subseteq \mathsf{P} $.
\end{theorem}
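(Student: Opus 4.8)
The plan is to prove the two inclusions separately: achievability by exhibiting a \emph{single} coding distribution in Corollary \ref{cor-comb} under which the twelve entropy inequalities \eqref{Eq_combination_network_inner_bound_H_1}--\eqref{Eq_combination_network_inner_bound_H_12} collapse to the six capacity inequalities \eqref{Eq_Capacity_CN_1}--\eqref{Eq_Capacity_CN_6}, and the converse by recognizing \eqref{Eq_Capacity_CN_1}--\eqref{Eq_Capacity_CN_6} as particular members of the generalized cut-set outer bounds of \cite{salimi2015generalized}. For achievability I would take the $2^K-1$ link inputs $\{V_S\}_{S\in\mathsf{P}}$ to be mutually independent and uniform, so that $H(V_S)=C_S$ and hence $H(V_{\mathsf{W}})=C_{\mathsf{W}}$ for every $\mathsf{W}\subseteq\mathsf{P}$. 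The crux is the assignment of components to the auxiliaries, which I dictate by whether a set $S$ contains $K-1$ and/or $K$: set $U_{\overline{\phi}}=V_{\mathsf{W}_{K-1}^{\mathsf{P}}\cap\mathsf{W}_K^{\mathsf{P}}}$ (seen by both $Y_{K-1}$ and $Y_K$), $U_{\overline{K}}=V_{\downarrow_{\mathsf{W}_{K-1}^{\mathsf{P}}}\{\overline{K}\}}$ (seen by $Y_{K-1}$ only), $U_{\overline{K-1}}=V_{\downarrow_{\mathsf{W}_{K}^{\mathsf{P}}}\{\overline{K-1}\}}$ (seen by $Y_K$ only), with everything else folded into $X=V_{\mathsf{P}}$. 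These four index-sets partition $\mathsf{P}$, and since the $V_S$ are independent the factorization \eqref{pmfstructure} holds, with $U_{\overline{K-1}}$ and $U_{\overline{K}}$ conditionally independent given $U_{\overline{\phi}}$.

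Next I would evaluate each entropy term using the disjoint down-set/up-set decomposition of $\mathsf{W}_i^{\mathsf{P}}$ guaranteed by \eqref{Eq_Lemma_3}--\eqref{Eq_Lemma_6}. Two facts do all the work: because $(U_{\overline{\phi}},U_{\overline{K-1}})$ jointly equal $V_{\mathsf{W}_K^{\mathsf{P}}}$ and $(U_{\overline{\phi}},U_{\overline{K}})$ jointly equal $V_{\mathsf{W}_{K-1}^{\mathsf{P}}}$, the conditional entropies $H(V_{\mathsf{W}_K^{\mathsf{P}}}\mid U_{\overline{\phi}},U_{\overline{K-1}})$ and $H(V_{\mathsf{W}_{K-1}^{\mathsf{P}}}\mid U_{\overline{\phi}},U_{\overline{K}})$ vanish; and by independence every surviving term equals a link-capacity sum, e.g. $I(U_{\overline{K}};V_{\mathsf{W}_{K-1}^{\mathsf{P}}}\mid U_{\overline{\phi}})=C_{\downarrow_{\mathsf{W}_{K-1}^{\mathsf{P}}}\{\overline{K}\}}$ and $H(V_{\mathsf{W}_j^{\mathsf{P}}}\mid U_{\overline{\phi}})=C_{\downarrow_{\mathsf{W}_j^{\mathsf{P}}}\{\overline{K},\overline{K-1}\}}$. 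Substituting, \eqref{Eq_combination_network_inner_bound_H_1}--\eqref{Eq_combination_network_inner_bound_H_2} give \eqref{Eq_Capacity_CN_1}--\eqref{Eq_Capacity_CN_2}, the pair \eqref{Eq_combination_network_inner_bound_H_3}--\eqref{Eq_combination_network_inner_bound_H_4} both collapse to \eqref{Eq_Capacity_CN_3} (each equal to $C_{\mathsf{W}_{K-1}^{\mathsf{P}}\cup\mathsf{W}_K^{\mathsf{P}}}$), \eqref{Eq_combination_network_inner_bound_H_5} gives \eqref{Eq_Capacity_CN_4}, \eqref{Eq_combination_network_inner_bound_H_10} gives \eqref{Eq_Capacity_CN_5}, and \eqref{Eq_combination_network_inner_bound_H_11} gives \eqref{Eq_Capacity_CN_6}. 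The leftover bounds \eqref{Eq_combination_network_inner_bound_H_6}--\eqref{Eq_combination_network_inner_bound_H_9} and \eqref{Eq_combination_network_inner_bound_H_12} must then be shown redundant, which I would do by writing each as a nonnegative combination of the six survivors; for instance \eqref{Eq_combination_network_inner_bound_H_12} follows from adding two copies of \eqref{Eq_Capacity_CN_4} (with indices $j_1,j_2$) and the inclusion--exclusion inequality $C_{\mathsf{W}_{j_1}^{\mathsf{P}}}+C_{\mathsf{W}_{j_2}^{\mathsf{P}}}\le C_{\downarrow_{\mathsf{W}_{j_1}^{\mathsf{P}}}\{\overline{K},\overline{K-1}\}}+C_{\downarrow_{\mathsf{W}_{j_2}^{\mathsf{P}}}\{\overline{K-1.K}\}}+C_{\mathsf{W}_{K-1}^{\mathsf{P}}}+C_{\mathsf{W}_K^{\mathsf{P}}}$, valid because $C_{\mathsf{W}}$ is additive in $\mathsf{W}$.

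For the converse I would first dispatch the three pure cut-set inequalities \eqref{Eq_Capacity_CN_1}, \eqref{Eq_Capacity_CN_2}, \eqref{Eq_Capacity_CN_4}: all messages decoded at a single $Y_i$ are a function of its observation $V_{\mathsf{W}_i^{\mathsf{P}}}$, whose entropy is at most $C_{\mathsf{W}_i^{\mathsf{P}}}$, so Fano's inequality yields the bound. The substantive ones are \eqref{Eq_Capacity_CN_3}, \eqref{Eq_Capacity_CN_5}, \eqref{Eq_Capacity_CN_6}. Rewriting their right-hand sides via the same order-theoretic identities — e.g. \eqref{Eq_Capacity_CN_3} reads $R_{\overline{\phi}}+R_{\overline{K-1}}+R_{\overline{K}}\le C_{\mathsf{W}_{K-1}^{\mathsf{P}}\cup\mathsf{W}_K^{\mathsf{P}}}$, a joint cut over the pair $(Y_{K-1},Y_K)$ that charges the common rate $R_{\overline{\phi}}$ only once despite its being decoded at both — I would match each to a specific instance of the generalized cut-set family of \cite{salimi2015generalized}, which is built from the submodularity of entropy and parameterized by a choice of cut (a subset of second-layer nodes) together with a nested sequence of message/receiver sets. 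The identification reduces to selecting, for each inequality, the parameters that reproduce exactly its left- and right-hand sides.

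The main obstacle is precisely this last matching step: the family of \cite{salimi2015generalized} contains infinitely many valid bounds, and only a handful coincide with \eqref{Eq_Capacity_CN_3}, \eqref{Eq_Capacity_CN_5}, \eqref{Eq_Capacity_CN_6}, so the difficulty is the combinatorial search for the right parameterization rather than any single entropy manipulation. It is here that the order-theoretic bookkeeping earns its keep, since expressing the right-hand sides as link-capacity sums over down-sets and up-sets of $\mathsf{W}_i^{\mathsf{P}}$ (via \eqref{Eq_Lemma_3}--\eqref{Eq_Lemma_6}) puts them in the same syntactic form as the generalized cut-set bounds and makes the match recognizable. A secondary obstacle, on the achievability side, is the redundancy elimination of the previous paragraph: checking that the six inequalities dominate the remaining instances for all admissible $j,j_1,j_2$, which is routine once the collapses are in hand, being a finite set of inclusion--exclusion inequalities among link-capacity sums. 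With both inclusions established, the union over coding distributions is pinched to the single optimal distribution above, the inner and outer bounds coincide, and the stated capacity region follows.
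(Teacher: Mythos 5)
Your proposal is correct and takes essentially the same route as the paper's own proof: achievability via the single distribution that assigns the link components seen by $Y_{K-1}$ and $Y_K$ to the auxiliaries (your $U_{\overline{\phi}}=V_{\mathsf{W}_{K-1}^{\mathsf{P}}\cap\mathsf{W}_K^{\mathsf{P}}}$, $U_{\overline{K}}=V_{\downarrow_{\mathsf{W}_{K-1}^{\mathsf{P}}}\{\overline{K}\}}$, $U_{\overline{K-1}}=V_{\downarrow_{\mathsf{W}_{K}^{\mathsf{P}}}\{\overline{K-1}\}}$ partition $\mathsf{P}$, whereas the paper uses $U_{\overline{K}}=V_{\mathsf{W}_{K-1}^{\mathsf{P}}}$, $U_{\overline{K-1}}=V_{\mathsf{W}_{K}^{\mathsf{P}}}$ with the common part duplicated, but every entropy and mutual-information term evaluates identically), the same identification of which inner-bound inequalities yield \eqref{Eq_Capacity_CN_1}--\eqref{Eq_Capacity_CN_6} and the same redundancy elimination of the remaining ones via modularity of $C_{\mathsf{W}}$. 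The converse likewise mirrors the paper: \eqref{Eq_Capacity_CN_1}, \eqref{Eq_Capacity_CN_2}, \eqref{Eq_Capacity_CN_4} from ordinary cut sets, and \eqref{Eq_Capacity_CN_3}, \eqref{Eq_Capacity_CN_5}, \eqref{Eq_Capacity_CN_6} matched, after the order-theoretic rewriting of their right-hand sides, to specific members of the generalized cut-set family of Salimi et al.
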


\begin{remark}
This capacity result generalizes three previously known results. First, it reproduces the capacity region for two nested (i.e., degraded) messages with at most two receivers demanding only the common messages $M_{\overline{\phi}}$ given in \cite[Theorems 4 and 5]{salman2018achievableAR} and \cite[Theorem 3]{bidokhti2016capacity} by setting $R_{\overline{K-1.K}}=R_{\overline{K-1}}=0$ and $R_{\overline{K}}=R_{\overline{K-1}}=0$ in \eqref{Eq_Capacity_CN_1}-\eqref{Eq_Capacity_CN_6}. Moreover, Theorem \ref{Th_Capacity_combination_networks} also recovers the capacity region for two messages each intended for $K-1$ receivers given in \cite[Theorem 3]{salman2018achievableAR} by setting $R_{\overline{\phi}}=R_{\overline{K-1.K}}=0$ in \eqref{Eq_Capacity_CN_1}-\eqref{Eq_Capacity_CN_6}. It must be noted here that the converse proofs in \cite{bidokhti2016capacity} and \cite{salman2018achievableAR} are established from ``first principles", and while they use the sub-modularity of entropy, they do not take a top-down approach as we do here in Section \ref{sec:converse} of making the connection to the generalized cut-set bound framework of \cite{salimi2015generalized}. Moreover, achievability for up to two common receivers was proved in \cite[Proposition 1]{bidokhti2016capacity} using linear network coding tailored to the combination network and two nested messages, whereas we take the top-down approach in Section \ref{sec:achievability-combnet} of  \cite{romero2016superposition,salman2018achievableAR} of starting from an inner bound for the DM BC and specializing it to the combination network.
\end{remark}

The special case of Theorem \ref{Th_Capacity_combination_networks} for three degraded messages (which is more general than \cite[Theorem 5]{salman2018achievableAR} and \cite[Theorem 3]{bidokhti2016capacity}) is given next. We simply set $R_{\overline{K-1}}=0$ in Theorem \ref{Th_Capacity_combination_networks} and note that \eqref{Eq_Capacity_CN_3} becomes redundant because of \eqref{Eq_Capacity_Three_Deg_CN_2} since $C_{  \downarrow_{\mathsf{W}_{K-1}^{\mathsf{P}}} \{\overline{K}\} }
+ C_{\mathsf{W}_{K}^{\mathsf{P}}}=C_{\mathsf{W}_{K-1}^{\mathsf{P}}\cup \mathsf{W}_{K}^{\mathsf{P}}}\geq C_{\mathsf{W}_{K-1}^{\mathsf{P}}}$

\begin{corollary}
The capacity region of the $K$-user combination network for the three degraded messages, i.e., $\mathsf{E}=\{M_{\overline{\phi}},M_{\overline{K}},M_{\overline{K-1.K}}  \}$ is the set of non-negative rate tuples ($R_{\overline{\phi}}, R_{\overline{K}},R_{\overline{K-1.K}} $) satisfying for all $j\in \{1,2,\cdots,K-2\}$
\begin{align}
&R_{\overline{\phi}} \leq C_{\mathsf{W}_{K}^{\mathsf{P}}}
\label{Eq_Capacity_Three_Deg_CN_1}\\
&R_{\overline{\phi}}+ R_{\overline{K}}\leq C_{\mathsf{W}_{K-1}^{\mathsf{P}}}
\label{Eq_Capacity_Three_Deg_CN_2}\\
&R_{\overline{\phi}}+ R_{\overline{K}} +R_{\overline{K-1.K}} 
\leq  
C_{\mathsf{W}_j^{\mathsf{P}}}
\label{Eq_Capacity_Three_Deg_CN_4}
\\
&2R_{\overline{\phi}}+  R_{\overline{K}} +R_{\overline{K-1.K}} 
\leq  \nonumber 
\\ & \qquad C_{  \downarrow_{\mathsf{W}_j^{\mathsf{P}}} \{\overline{K-1.K}\} } 
+ C_{\mathsf{W}_{K-1}^{\mathsf{P}}}
+ C_{\mathsf{W}_{K}^{\mathsf{P}}}
\label{Eq_Capacity_Three_Deg_CN_5} 
\\
&2R_{\overline{\phi}}+ 2R_{\overline{K}} +R_{\overline{K-1.K}} 
\leq  \nonumber 
\\ & \qquad C_{  \downarrow_{\mathsf{W}_j^{\mathsf{P}}} \{\overline{K},\overline{K-1} \} } 
+ C_{\mathsf{W}_{K-1}^{\mathsf{P}}} 
+ C_{\mathsf{W}_{K}^{\mathsf{P}}}
\label{Eq_Capacity_Three_Deg_CN_6} 
\end{align}  
\end{corollary}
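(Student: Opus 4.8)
The plan is to derive this Corollary directly from Theorem~\ref{Th_Capacity_combination_networks} by specializing the four-message diamond result to the three-degraded-message setting, which amounts to deactivating the message $M_{\overline{K-1}}$, i.e., setting $R_{\overline{K-1}}=0$. The first step is to argue that this specialization is valid in both the achievability and converse directions, so that the capacity region of the three-message problem is exactly the intersection of the four-message capacity region with the hyperplane $\{R_{\overline{K-1}}=0\}$.

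For achievability, I would note that any $(\{2^{nR_S}\}_{S\in\mathsf{E}},n)$ code for the diamond message set $\mathsf{E}=\{\overline{\phi},\overline{K},\overline{K-1},\overline{K-1.K}\}$ in which $R_{\overline{K-1}}$ is set to zero is, by construction, a valid code for the reduced message set $\{\overline{\phi},\overline{K},\overline{K-1.K}\}$; hence every rate tuple in the slice $\{R_{\overline{K-1}}=0\}$ of the region of Theorem~\ref{Th_Capacity_combination_networks} is achievable for the three-message problem. For the converse, I would observe the reverse embedding: any achievable three-message tuple $(R_{\overline{\phi}},R_{\overline{K}},R_{\overline{K-1.K}})$ lifts to the achievable four-message tuple $(R_{\overline{\phi}},R_{\overline{K}},0,R_{\overline{K-1.K}})$, so the outer bounds \eqref{Eq_Capacity_CN_1}--\eqref{Eq_Capacity_CN_6} must hold with $R_{\overline{K-1}}=0$. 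Together these two observations show that the three-message capacity region is obtained by simply setting $R_{\overline{K-1}}=0$ in \eqref{Eq_Capacity_CN_1}--\eqref{Eq_Capacity_CN_6}.

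It then remains to substitute $R_{\overline{K-1}}=0$ and simplify. Inequalities \eqref{Eq_Capacity_CN_1}, \eqref{Eq_Capacity_CN_2}, \eqref{Eq_Capacity_CN_4}, \eqref{Eq_Capacity_CN_5}, and \eqref{Eq_Capacity_CN_6} collapse term-by-term to \eqref{Eq_Capacity_Three_Deg_CN_1}, \eqref{Eq_Capacity_Three_Deg_CN_2}, \eqref{Eq_Capacity_Three_Deg_CN_4}, \eqref{Eq_Capacity_Three_Deg_CN_5}, and \eqref{Eq_Capacity_Three_Deg_CN_6}, respectively, with no further work. The only inequality requiring attention is \eqref{Eq_Capacity_CN_3}, which with $R_{\overline{K-1}}=0$ becomes $R_{\overline{\phi}}+R_{\overline{K}}\leq C_{\downarrow_{\mathsf{W}_{K-1}^{\mathsf{P}}}\{\overline{K}\}}+C_{\mathsf{W}_K^{\mathsf{P}}}$ and shares its left-hand side with \eqref{Eq_Capacity_Three_Deg_CN_2}. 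I would show this inequality is redundant by verifying the order-theoretic partition identity that $\downarrow_{\mathsf{W}_{K-1}^{\mathsf{P}}}\{\overline{K}\}$ and $\mathsf{W}_K^{\mathsf{P}}$ are disjoint with union equal to $\mathsf{W}_{K-1}^{\mathsf{P}}\cup\mathsf{W}_K^{\mathsf{P}}$ (the down-set consists precisely of the sets containing $K-1$ but not $K$, while $\mathsf{W}_K^{\mathsf{P}}$ consists of the sets containing $K$), whence $C_{\downarrow_{\mathsf{W}_{K-1}^{\mathsf{P}}}\{\overline{K}\}}+C_{\mathsf{W}_K^{\mathsf{P}}}=C_{\mathsf{W}_{K-1}^{\mathsf{P}}\cup\mathsf{W}_K^{\mathsf{P}}}\geq C_{\mathsf{W}_{K-1}^{\mathsf{P}}}$, so that \eqref{Eq_Capacity_CN_3} is implied by \eqref{Eq_Capacity_Three_Deg_CN_2}.

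Since the remaining five inequalities are in one-to-one correspondence with the stated ones and the sixth is redundant, the specialization is complete. I do not anticipate any genuine obstacle here beyond the bookkeeping of the substitution; the only nontrivial point is the redundancy verification for \eqref{Eq_Capacity_CN_3}, which reduces to the elementary set-partition identity above together with the monotonicity $C_{\mathsf{W}}\leq C_{\mathsf{W}'}$ for $\mathsf{W}\subseteq\mathsf{W}'$, itself immediate from $C_{\mathsf{W}}=\sum_{S\in\mathsf{W}}C_S$ with nonnegative link capacities.
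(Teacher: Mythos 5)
Your proposal is correct and follows essentially the same route as the paper: set $R_{\overline{K-1}}=0$ in Theorem \ref{Th_Capacity_combination_networks} and observe that \eqref{Eq_Capacity_CN_3} becomes redundant given \eqref{Eq_Capacity_Three_Deg_CN_2}, via the same identity $C_{\downarrow_{\mathsf{W}_{K-1}^{\mathsf{P}}}\{\overline{K}\}}+C_{\mathsf{W}_{K}^{\mathsf{P}}}=C_{\mathsf{W}_{K-1}^{\mathsf{P}}\cup\mathsf{W}_{K}^{\mathsf{P}}}\geq C_{\mathsf{W}_{K-1}^{\mathsf{P}}}$. Your added justification that the three-message capacity region is exactly the $\{R_{\overline{K-1}}=0\}$ slice of the four-message region, and your explicit verification of the disjoint-union identity, are details the paper leaves implicit but do not change the argument.
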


\begin{remark}
The capacity region of the $K$-user combination network for the two degraded messages, i.e., $\mathsf{E}=\{M_{\overline{\phi}},M_{\overline{K-1.K}}  \}$ is the set of non-negative rate pairs ($R_{\overline{\phi}},R_{\overline{K-1.K}} $) satisfying for all $j\in \{1,2,\cdots,K-2\}$ the inequalities \eqref{Eq_Capacity_Three_Deg_CN_1}-\eqref{Eq_Capacity_Three_Deg_CN_5} (with $R_{\overline{K}}=0$). The inequality \eqref{Eq_Capacity_Three_Deg_CN_6} becomes redundant because of \eqref{Eq_Capacity_Three_Deg_CN_5} (with $R_{\overline{K}}=0$) since $C_{  \downarrow_{\mathsf{W}_j^{\mathsf{P}}} \{\overline{K-1.K}\} } \leq C_{  \downarrow_{\mathsf{W}_j^{\mathsf{P}}} \{\overline{K},\overline{K-1} \} } $. This capacity region coincides with the one found in \cite[Theorem 5]{salman2018achievableAR} and \cite[Theorem 3]{bidokhti2016capacity}.
\end{remark}

\subsection{Proof of Achievability for Theorem \ref{Th_Capacity_combination_networks}}
\label{sec:achievability-combnet}
One or both of the two negative conditional entropy terms, namely, $
H(V_{\mathsf{W}_K^{\mathsf{P}}}|U_{\overline{\phi}},U_{\overline{K-1}}) $ and $H(V_{\mathsf{W}_{K-1}^{\mathsf{P}}}|U_{\overline{\phi}},U_{\overline{K}})$ appear in all the bounds (except in \eqref{Eq_combination_network_inner_bound_H_5}) of the inequalities of Corollary \ref{cor-comb}. 
Hence those bounds can be upper bounded by setting $V_{\mathsf{W}_K^{\mathsf{P}}} = f (U_{\overline{\phi}},U_{\overline{K-1}})$ and $V_{\mathsf{W}_{K-1}^{\mathsf{P}}} = g (U_{\overline{\phi}},U_{\overline{K}})$ for some deterministic functions $f$ and $g$. 
Next, one or both of the positive entropy terms $ H(V_{\mathsf{W}_{K-1}^{\mathsf{P}}})$ and $ H(V_{\mathsf{W}_K^{\mathsf{P}}})$ appear in each of the bounds (except in \eqref{Eq_combination_network_inner_bound_H_5}) and all such terms can be maximized by taking the deterministic functions $f$ and $g$ to be identity maps so that $V_{\mathsf{W}_K^{\mathsf{P}}} = (U_{\overline{\phi}},U_{\overline{K-1}})$ and $V_{\mathsf{W}_{K-1}^{\mathsf{P}}} = (U_{\overline{\phi}},U_{\overline{K}})$ (thus this choice maximizes the first two bounds \eqref{Eq_combination_network_inner_bound_H_1} and \eqref{Eq_combination_network_inner_bound_H_2} of Corollary \ref{cor-comb}). 
Note that this choice also simultaneously maximizes the positive conditional entropy terms of the form $
H(V_{\mathsf{W}_j^{\mathsf{P}}}|U_{\overline{\phi}},U_{\overline{K-1}}) $ and $
H(V_{\mathsf{W}_j^{\mathsf{P}}}|U_{\overline{\phi}},U_{\overline{K}}) $ in the bounds of \eqref{Eq_combination_network_inner_bound_H_6} and \eqref{Eq_combination_network_inner_bound_H_7} and $
H(V_{\mathsf{W}_j^{\mathsf{P}}}|U_{\overline{\phi}},U_{\overline{K-1}}, U_{\overline{K}}) $ in the bounds of \eqref{Eq_combination_network_inner_bound_H_8}-\eqref{Eq_combination_network_inner_bound_H_11}. 
Next, consider the positive conditional entropy terms of the form $H(V_{\mathsf{W}_{K-1}^{\mathsf{P}}}|U_{\overline{\phi}}) $, 
$ H(V_{\mathsf{W}_K^{\mathsf{P}}}|U_{\overline{\phi}}) $ in the bounds of \eqref{Eq_combination_network_inner_bound_H_3}-\eqref{Eq_combination_network_inner_bound_H_4} and \eqref{Eq_combination_network_inner_bound_H_8}-\eqref{Eq_combination_network_inner_bound_H_9} and $H(V_{\mathsf{W}_{j}^{\mathsf{P}}}|U_{\overline{\phi}}) $ in \eqref{Eq_combination_network_inner_bound_H_11}-\eqref{Eq_combination_network_inner_bound_H_12}. 
All of these terms are maximized by choosing (given $V_{\mathsf{W}_K^{\mathsf{P}}} = (U_{\overline{\phi}},U_{\overline{K-1}})$ and $V_{\mathsf{W}_{K-1}^{\mathsf{P}}} = (U_{\overline{\phi}},U_{\overline{K}})$) $U_{\overline{\phi}}$ to be the just the common part of $V_{\mathsf{W}_{K-1}^{\mathsf{P}}}$ and $V_{\mathsf{W}_K^{\mathsf{P}}}$. After making these choices, given that there are cardinality constraints on the alphabet $\mathcal{V}_S$  of $V_S$ of $2^{C_S}$, all the bounds are maximized by choosing the input components $(V_S: S \in \mathsf{F})$ to be uniform and independent with $\mathcal{V}_S$ taken to be $[1:2^{C_S}]$.

The above argument suggests (even though it does not prove the optimality of) the choice of coding distribution resulting by setting 
\begin{align}
U_{\overline{\phi}} & =V_{\uparrow_{\mathsf{P}} \{K-1.K\} } \label{eq1optdist}
\\ U_{\overline{K-1}} &= V_{\mathsf{W}_{K}^{\mathsf{P}} } \label{eq2optdist} \\ U_{\overline{K}} &=V_{\mathsf{W}_{K-1}^{\mathsf{P}} } \label{eq3optdist} \\
X&=V_{\mathsf{P}} \label{eq4optdist}
\end{align}
and choosing the channel input components $V_S$ for all $S\in\mathsf{P}$ to be independent and uniformly distributed over $\mathcal{V}_S$ where $|\mathcal{V}_S|=2^{C_S}$. 
The achievability proof now follows from 
Theorem \ref{Th_AchRegion_FourMsgs} 
by choosing the above coding distribution. Note that such a choice of the auxiliary random variables and the channel input components
$ (U_{\overline{\phi}}, U_{\overline{K}}, U_{\overline{K-1}},X) $ has a joint distribution that satisfies \eqref{pmfstructure}, as it must. For this specific choice, we can compute the mutual information terms in \eqref{Eq_Corollary_1}-\eqref{Eq_Corollary_12} using $Y_i= V_{\mathsf{W}_i^{\mathsf{P}}} $ and \eqref{Eq_Lemma_3}-\eqref{Eq_Lemma_6} to obtain
\begin{align}
    I(U_{\overline{\phi}},U_{\overline{K-1}};Y_{K})&= C_{\mathsf{W}_K^{\mathsf{P}}}
    \label{Eq:compute_InnerBound_1}\\
    I(U_{\overline{\phi}},U_{\overline{K}};Y_{K-1})&= C_{\mathsf{W}_{K-1}^{\mathsf{P}}}
    \label{Eq:compute_InnerBound_2}\\
    I(U_{\overline{K}};Y_{K-1}|U_{\overline{\phi}})&= C_{ \downarrow_{\mathsf{W}_{K-1}^{\mathsf{P}}} \{\overline{K}\}}
    \label{Eq:compute_InnerBound_3}\\
    I(U_{\overline{K-1}};Y_{K}|U_{\overline{\phi}})&= C_{  \downarrow_{\mathsf{W}_{K}^{\mathsf{P}}} \{\overline{K-1}\} }
    \label{Eq:compute_InnerBound_4}\\
    I(X;Y_j) &= C_{\mathsf{W}_j^{\mathsf{P}}} 
    \label{Eq:compute_InnerBound_5}\\
    I(X;Y_j|U_{\overline{\phi}},U_{\overline{K}})&= C_{  \downarrow_{\mathsf{W}_j^{\mathsf{P}}} \{\overline{K}\} }
    \label{Eq:compute_InnerBound_6}\\
    I(X;Y_j|U_{\overline{\phi}},U_{\overline{K-1}})&= C_{  \downarrow_{\mathsf{W}_j^{\mathsf{P}}} \{\overline{K-1}\} }
    \label{Eq:compute_InnerBound_7}\\
    I(X;Y_j|U_{\overline{\phi}},U_{\overline{K-1}},U_{\overline{K}})&= C_{  \downarrow_{\mathsf{W}_j^{\mathsf{P}}} \{\overline{K-1.K}\} }
    \label{Eq:compute_InnerBound_8}\\
    I(X;Y_j|U_{\overline{\phi}})&= C_{  \downarrow_{\mathsf{W}_j^{\mathsf{P}}} \{\overline{K},\overline{K-1} \} }
    \label{Eq:compute_InnerBound_9}
\end{align} where $j\in\{1,2,\cdots, K-2\}$. Hence, we get \eqref{Eq_Capacity_CN_1}-\eqref{Eq_Capacity_CN_2} by substituting \eqref{Eq:compute_InnerBound_1}-\eqref{Eq:compute_InnerBound_2} in \eqref{Eq_Corollary_1}-\eqref{Eq_Corollary_2} and \eqref{Eq_Capacity_CN_3} by substituting \eqref{Eq:compute_InnerBound_3}-\eqref{Eq:compute_InnerBound_4} in \eqref{Eq_Corollary_3}-\eqref{Eq_Corollary_4} since $C_{  \downarrow_{\mathsf{W}_{K-1}^{\mathsf{P}}} \{\overline{K}\} }
+ C_{\mathsf{W}_{K}^{\mathsf{P}}}=C_{  \downarrow_{\mathsf{W}_{K}^{\mathsf{P}}} \{\overline{K-1}\} }
+ C_{\mathsf{W}_{K-1}^{\mathsf{P}}}$. Moreover, we get \eqref{Eq_Capacity_CN_4} directly by substituting \eqref{Eq:compute_InnerBound_5} in \eqref{Eq_Corollary_5}. Then, we obtain  \eqref{Eq_Capacity_CN_5} from \eqref{Eq_Corollary_10} using \eqref{Eq:compute_InnerBound_1}-\eqref{Eq:compute_InnerBound_2} and \eqref{Eq:compute_InnerBound_8}. Finally, we get \eqref{Eq_Capacity_CN_6} by substituting \eqref{Eq:compute_InnerBound_1}-\eqref{Eq:compute_InnerBound_2} and \eqref{Eq:compute_InnerBound_9} in \eqref{Eq_Corollary_11}. 

The rest of the inequalities in the inner bound, i.e., \eqref{Eq_Corollary_6}-\eqref{Eq_Corollary_9} and \eqref{Eq_Corollary_12} are redundant, as we show next. By computing the mutual information terms in \eqref{Eq_Corollary_6}-\eqref{Eq_Corollary_9} and \eqref{Eq_Corollary_12} for the specified coding distribution in \eqref{eq1optdist}-\eqref{eq4optdist}, it is left to the reader to verify that we obtain the following five categories of bounds (with each $j,j_1,j_2$ taking all possible values in the set $\{1,2,\cdots,K-2\}$) 
\begin{align}
&R_{\overline{\phi}}+ R_{\overline{K-1}}+ R_{\overline{K}} +R_{\overline{K-1.K}} 
\leq  
C_{  \downarrow_{\mathsf{W}_j^{\mathsf{P}}} \{\overline{K}\} } 
+ C_{\mathsf{W}_{K}^{\mathsf{P}}}
\label{Eq_Corollary_6_Redundant} 
\\
&R_{\overline{\phi}}+ R_{\overline{K-1}}+ R_{\overline{K}} +R_{\overline{K-1.K}} 
\leq 
C_{  \downarrow_{\mathsf{W}_j^{\mathsf{P}}} \{\overline{K-1}\} }
+ C_{\mathsf{W}_{K-1}^{\mathsf{P}}}
\label{Eq_Corollary_7_Redundant} 
\\
&R_{\overline{\phi}}+ R_{\overline{K-1}}+ R_{\overline{K}} +R_{\overline{K-1.K}} 
\leq  \nonumber \\ & \quad
C_{  \downarrow_{\mathsf{W}_j^{\mathsf{P}}} \{\overline{K-1.K}\} }+ 
C_{  \downarrow_{\mathsf{W}_{K-1}^{\mathsf{P}}} \{\overline{K}\} }
+ C_{\mathsf{W}_{K}^{\mathsf{P}}}
\label{Eq_Corollary_8_Redundant} 
\\
&R_{\overline{\phi}}+ R_{\overline{K-1}}+ R_{\overline{K}} +R_{\overline{K-1.K}} 
\leq \nonumber \\ & \quad
C_{  \downarrow_{\mathsf{W}_j^{\mathsf{P}}} \{\overline{K-1.K}\} }
+C_{  \downarrow_{\mathsf{W}_{K}^{\mathsf{P}}} \{\overline{K-1}\} } 
+ C_{\mathsf{W}_{K-1}^{\mathsf{P}}}
\label{Eq_Corollary_9_Redundant} 
\\
&2R_{\overline{\phi}}+ 2R_{\overline{K-1}}+ 2R_{\overline{K}} +2R_{\overline{K-1.K}} 
\leq \nonumber \\ & \quad
C_{  \downarrow_{\mathsf{W}_{j_1}^{\mathsf{P}}} \{\overline{K},\overline{K-1} \} }
+C_{  \downarrow_{\mathsf{W}_{j_1}^{\mathsf{P}}} \{\overline{K-1.K}\} }
+ C_{\mathsf{W}_{K-1}^{\mathsf{P}}}
+ C_{\mathsf{W}_{K}^{\mathsf{P}}}
\label{Eq_Corollary_12_Redundant} 
\end{align} 

We next show that these categories of bounds are all redundant. First, \eqref{Eq_Corollary_6_Redundant} and \eqref{Eq_Corollary_7_Redundant} are redundant from \eqref{Eq_Capacity_CN_4}. Using \eqref{Eq:Equality_Cs_1}, we can rewrite the right hand side of \eqref{Eq_Corollary_6_Redundant} and lower bound it as follows
\begin{align}
 C_{  \downarrow_{\mathsf{W}_j^{\mathsf{P}}} \{\overline{K}\} } 
+ C_{\mathsf{W}_{K}^{\mathsf{P}}}
&= 
C_{  \mathsf{W}_j^{\mathsf{P}} }
- C_{  \mathsf{W}_j^{\mathsf{P}} \cap  \mathsf{W}_{K}^{\mathsf{P}}}
+ C_{\mathsf{W}_{K}^{\mathsf{P}}}   \nonumber \\
&\geq C_{  \mathsf{W}_j^{\mathsf{P}} } 
\label{Eq:redundant_Show_1}
\end{align} where \eqref{Eq:redundant_Show_1} follows from the fact that $C_{  \mathsf{W}_j^{\mathsf{P}} \cap  \mathsf{W}_{K}^{\mathsf{P}}}
\leq C_{\mathsf{W}_{K}^{\mathsf{P}}}$. Similarly, we can show that \eqref{Eq_Corollary_7_Redundant} is redundant due to \eqref{Eq_Capacity_CN_4}. 

Moreover, \eqref{Eq_Corollary_8_Redundant} and \eqref{Eq_Corollary_9_Redundant} are also redundant from \eqref{Eq_Capacity_CN_4} since we can rewrite the right hand side of \eqref{Eq_Corollary_8_Redundant}, using \eqref{Eq:Equality_Cs_1}, and lower bound it as follows
\begin{align}
&C_{  \downarrow_{\mathsf{W}_j^{\mathsf{P}}} \{\overline{K-1.K}\} }
+ C_{  \downarrow_{\mathsf{W}_{K-1}^{\mathsf{P}}} \{\overline{K}\} } 
+ C_{\mathsf{W}_{K}^{\mathsf{P}}} \nonumber \\
&= 
C_{ \mathsf{W}_j^{\mathsf{P}}} -
C_{ \mathsf{W}_j^{\mathsf{P}}  \cap (\mathsf{W}_K^{\mathsf{P}} \cup \mathsf{W}_{K-1}^{\mathsf{P}}) }
+ C_{  \mathsf{W}_{K-1}^{\mathsf{P}}} 
- C_{  \mathsf{W}_{K-1}^{\mathsf{P}} \cap \mathsf{W}_{K-1}^{\mathsf{P}}} 
+ C_{\mathsf{W}_{K}^{\mathsf{P}}}\nonumber \\
&= 
C_{ \mathsf{W}_j^{\mathsf{P}}} -
C_{ \mathsf{W}_j^{\mathsf{P}}  \cap (\mathsf{W}_K^{\mathsf{P}} \cup \mathsf{W}_{K-1}^{\mathsf{P}}) }
+ C_{  \mathsf{W}_{K-1}^{\mathsf{P}} \cup \mathsf{W}_{K-1}^{\mathsf{P}}} 
\label{Eq:redundant_Show_2}\\
&\geq 
C_{ \mathsf{W}_j^{\mathsf{P}}} ,
\nonumber
\end{align} where \eqref{Eq:redundant_Show_2} follows from the modularity of $C_{\mathsf{W}}$. 

Finally, \eqref{Eq_Corollary_12_Redundant} is redundant from two times \eqref{Eq_Capacity_CN_4} since the right hand side of \eqref{Eq_Corollary_8_Redundant} can be written, using \eqref{Eq:Equality_Cs_1}-\eqref{Eq:Equality_Cs_2}, and lower bounded as follows
\begin{align}
&C_{  \downarrow_{\mathsf{W}_{j_1}^{\mathsf{P}}} \{\overline{K},\overline{K-1} \} }
+C_{  \downarrow_{\mathsf{W}_{j_2}^{\mathsf{P}}} \{\overline{K-1.K}\} }
+ C_{\mathsf{W}_{K-1}^{\mathsf{P}}}
+ C_{\mathsf{W}_{K}^{\mathsf{P}}}\nonumber \\
&= C_{  \mathsf{W}_{j_1}^{\mathsf{P}}} - C_{  \mathsf{W}_{j_1}^{\mathsf{P}} \cap \mathsf{W}_{K-1}^{\mathsf{P}}\cap \mathsf{W}_{K}^{\mathsf{P}}  } 
+C_{ \mathsf{W}_{j_2}^{\mathsf{P}}} - C_{ \mathsf{W}_{j_2}^{\mathsf{P}} \cap (\mathsf{W}_{K-1}^{\mathsf{P}}\cup \mathsf{W}_{K}^{\mathsf{P}}) } \nonumber \\ & \quad
+ C_{\mathsf{W}_{K-1}^{\mathsf{P}}}
+ C_{\mathsf{W}_{K}^{\mathsf{P}}}\nonumber \\
&= 
 C_{  \mathsf{W}_{j_1}^{\mathsf{P}}} 
-C_{  \mathsf{W}_{j_1}^{\mathsf{P}} \cap \mathsf{W}_{K-1}^{\mathsf{P}}\cap \mathsf{W}_{K}^{\mathsf{P}}  } 
+C_{  \mathsf{W}_{K-1}^{\mathsf{P}}\cap \mathsf{W}_{K}^{\mathsf{P}}  } \nonumber \\ & \quad
+C_{ \mathsf{W}_{j_2}^{\mathsf{P}}} 
-C_{ \mathsf{W}_{j_2}^{\mathsf{P}} \cap (\mathsf{W}_{K-1}^{\mathsf{P}}\cup \mathsf{W}_{K}^{\mathsf{P}}) } 
+C_{\mathsf{W}_{K-1}^{\mathsf{P}}\cup \mathsf{W}_{K}^{\mathsf{P}} } 
\label{Eq:redundant_Show_3}\\
&\geq 
 C_{  \mathsf{W}_{j_1}^{\mathsf{P}}} 
+C_{ \mathsf{W}_{j_2}^{\mathsf{P}}}, \nonumber 
\end{align}
where \eqref{Eq:redundant_Show_3} follows from the modularity of $C_{\mathsf{W}}$, and hence, $C_{\mathsf{W}_{K-1}^{\mathsf{P}}}
+ C_{\mathsf{W}_{K}^{\mathsf{P}}}= C_{  \mathsf{W}_{K-1}^{\mathsf{P}}\cap \mathsf{W}_{K}^{\mathsf{P}}  }+C_{\mathsf{W}_{K-1}^{\mathsf{P}}\cup \mathsf{W}_{K}^{\mathsf{P}} } $.

This concludes the achievability of the polytope described by the inequalities \eqref{Eq_Capacity_CN_1}-\eqref{Eq_Capacity_CN_6} by the single coding distribution given in \eqref{eq1optdist}-\eqref{eq4optdist}.



\begin{remark}
In Section \ref{sec:converse}, we show that the polytope (in the positive orthant) described by \eqref{Eq_Capacity_CN_1}-\eqref{Eq_Capacity_CN_6} is indeed the capacity region of the combination network. This indirectly shows that when the inner bound of Theorem \ref{Th_AchRegion_FourMsgs} is specialized to the combination network as a union of polytopes with the union taken over all admissible coding distributions of the form \eqref{pmfstructure} then the distribution given in \eqref{eq1optdist}-\eqref{eq4optdist} is extremal in that the rate region it produces subsumes the rate regions produced by any other distribution that satisfies \eqref{pmfstructure}.
\end{remark}


\subsection{The Converse Proof for Theorem \ref{Th_Capacity_combination_networks}}
\label{sec:converse}
The converse proof of \eqref{Eq_Capacity_CN_1}-\eqref{Eq_Capacity_CN_2} and \eqref{Eq_Capacity_CN_4} follows directly from the cut-set bounds. In what follows, we show that the inequalities \eqref{Eq_Capacity_CN_3} and \eqref{Eq_Capacity_CN_5}-\eqref{Eq_Capacity_CN_6} are outer bounds by identifying them as belonging to the infinite class of generalized cut-set bounds obtained in \cite{salimi2015generalized}. Those bounds provide a generic framework that can be used to obtain upper bounds on the achievable rates sent over any broadcast network that can be described by a directed acyclic graph with a non-negative capacity for each link or arc in the graph (and hence, over the combination network). As shown in \cite{salimi2015generalized}, these upper bounds take the generic form 
\begin{equation}
    \sum_{i\in S^{'}} \alpha_i R_{ \Phi_i ( \mathsf{W}_1^{\mathsf{E}},\mathsf{W}_2^{\mathsf{E}},\cdots, \mathsf{W}_K^{\mathsf{E}}  )} \leq \sum_{i\in S^{'}} \alpha_i C_{ \Phi_i ( \mathsf{W}_1^{\mathsf{P}},\mathsf{W}_2^{\mathsf{P}},\cdots, \mathsf{W}_K^{\mathsf{P}}  )}
    \label{Eq:GCSB_Upper_Bound}
\end{equation} where $S^{'}$ is a finite non-empty set, $\alpha_i$'s are non-negative real numbers and $\{\Phi_i\}_{i\in S^{'}} $ is a collection of set operators. The set operator is defined as a finite sequence of intersections and unions acting on any $K$ subsets of $\mathsf{P}$ to produce a subset of $\mathsf{P}$. 
Furthermore, these set operators $\{\Phi_i\}_{i\in S^{'} } $ in \eqref{Eq:GCSB_Upper_Bound} must satisfy the following extremal inequality
\begin{align}
    &\sum_{i\in S_1 }\alpha_i f( \Phi_i ( \mathsf{W}_1  ,\cdots, \mathsf{W}_K    )) 
    \leq
    \sum_{j\in S_2 }\beta_j f( \Pi_j ( \mathsf{W}_1   ,\cdots, \mathsf{W}_K    ))  \nonumber \\
    & \; 
    + \sum_{l\in S_3 }\gamma_l \big( f( \Gamma_l^+ ( \mathsf{W}_1    ,\cdots, \mathsf{W}_K    )) - f( \Gamma_l^- ( \mathsf{W}_1   ,\cdots, \mathsf{W}_K    ))  \big)
    \label{Eq:GCSB_General_Form}
\end{align} for any submodular function $f$, with equality holding when $f$ is a modular function, $\{\mathsf{W}_i \}_{i=1}^K\subseteq \mathsf{P}$, nonempty finite sets $S_1,S_2,S_3$, collection of non-negative real numbers ($\alpha_i,\beta_j,\gamma_l$), and a collection of set operators $\Pi_j, \Gamma_l^+,\Gamma_l^-$ such that $\Pi_j, \Gamma_l^+$ are finite sequences of unions only, and $\Gamma_l^- ( \mathsf{W}_1  ,\mathsf{W}_2  ,\cdots, \mathsf{W}_K    ) \subseteq \Gamma_l^+ ( \mathsf{W}_1  ,\mathsf{W}_2  ,\cdots, \mathsf{W}_K    )$ for any $l\in S_3$. 

The main problem in using this generic framework of infinitely many generalized cut-set bounds of the form \eqref{Eq:GCSB_Upper_Bound} is to identify the set operators $\Phi_i,\Pi_j,\Gamma_l^+,\Gamma_l^-$ for all $i\in S_1$, $ j\in S_2$ and $ l\in S_3$ that satisfy the extremal inequality \eqref{Eq:GCSB_General_Form} and provide specific upper bounds \eqref{Eq:GCSB_Upper_Bound} that exactly match with \eqref{Eq_Capacity_CN_3} and \eqref{Eq_Capacity_CN_5}-\eqref{Eq_Capacity_CN_6}, and thereby establishing the converse proof. 
With the aid of the order-theoretic relations \eqref{Eq_Lemma_1}-\eqref{Eq_Lemma_6}, we show that the following extremal inequalities proposed before in \cite[Proposition 1]{salimi2015generalized} provide tight upper bounds that match  \eqref{Eq_Capacity_CN_3} and \eqref{Eq_Capacity_CN_5}-\eqref{Eq_Capacity_CN_6}. From \cite[Proposition 1]{salimi2015generalized}, we get the following three extremal inequalities for $j\in \{1,2,\cdots, K-2\}$
\begin{align}
    &f(\mathsf{W}_{K-1}  \cup\mathsf{W}_{K}  )
    \leq
    f(\mathsf{W}_{K-1}  )+f(\mathsf{W}_{K}  )
    -f( \mathsf{W}_{K-1}  \cap\mathsf{W}_{K}  )
    \label{Eq:Extremal_inequality_0}\\
    &f(\mathsf{W}_{K-1}  \cup\mathsf{W}_{K}  \cup \mathsf{W}_{j}  )
    +f(\mathsf{W}_{K-1}  \cap\mathsf{W}_{K}    )
    \leq \nonumber \\
    &\; 
     f(\mathsf{W}_{j}  )
    -f(\mathsf{W}_{j}  \cap (\mathsf{W}_{K-1}  \cup\mathsf{W}_{K}  ))+f(\mathsf{W}_{K-1}  )+f(\mathsf{W}_{K}  )
    \label{Eq:Extremal_inequality_1}
    \\
    &f(\mathsf{W}_{K-1}  \cup\mathsf{W}_{K}  \cup \mathsf{W}_{j}  )
    +f( (\mathsf{W}_{K-1}  \cap\mathsf{W}_{K}  )\cup \nonumber \\
    & \quad (\mathsf{W}_{K-1}  \cap \mathsf{W}_{j}  ) \cup (\mathsf{W}_{K}  \cap \mathsf{W}_{j}  ) )
    \leq \nonumber \\
    &\; 
    f(\mathsf{W}_{j}  )
    -f(\mathsf{W}_{j}  \cap\mathsf{W}_{K-1}  \cap \mathsf{W}_{K}  )+f(\mathsf{W}_{K-1}  )+f(\mathsf{W}_{K}  )
    \label{Eq:Extremal_inequality_2}
    \end{align}

Using them, we get immediately from \eqref{Eq:GCSB_Upper_Bound} the following generalized cut-set bounds
\begin{align}
    &R_{\mathsf{W}_{K-1}^{\mathsf{E}}  \cup\mathsf{W}_{K}^{\mathsf{E}}  }
    \leq C_{\mathsf{W}_{K-1}^{\mathsf{P}}  \cup\mathsf{W}_{K}^{\mathsf{P}}  }
    \label{Eq:UpperBound_0}\\
    &R_{\mathsf{W}_{K-1}^{\mathsf{E}}  \cup\mathsf{W}_{K}^{\mathsf{E}}  \cup \mathsf{W}_{j}^{\mathsf{E}}  }
    +R_{\mathsf{W}_{K-1}^{\mathsf{E}}  \cap\mathsf{W}_{K}^{\mathsf{E}}    }
    \leq  \nonumber \\ & \quad
    C_{\mathsf{W}_{K-1}^{\mathsf{P}}  \cup\mathsf{W}_{K}^{\mathsf{P}}  \cup \mathsf{W}_{j}^{\mathsf{P}}  }
    +C_{\mathsf{W}_{K-1}^{\mathsf{P}}  \cap\mathsf{W}_{K}^{\mathsf{P}}    }
    \label{Eq:UpperBound_1}
    \\
    &R_{\mathsf{W}_{K-1}^{\mathsf{E}}  \cup\mathsf{W}_{K}^{\mathsf{E}}  \cup \mathsf{W}_{j}^{\mathsf{E}}  }
    +R_{ (\mathsf{W}_{K-1}^{\mathsf{E}}  \cap\mathsf{W}_{K}^{\mathsf{E}}  )\cup (\mathsf{W}_{K-1}^{\mathsf{E}}  \cap \mathsf{W}_{j}^{\mathsf{E}}  ) \cup (\mathsf{W}_{K}^{\mathsf{E}}  \cap \mathsf{W}_{j}^{\mathsf{E}}  ) }
    \leq \nonumber \\ & \quad
    C_{\mathsf{W}_{K-1}^{\mathsf{P}}  \cup\mathsf{W}_{K}^{\mathsf{P}}  \cup \mathsf{W}_{j}^{\mathsf{P}}  }
    +C_{ (\mathsf{W}_{K-1}^{\mathsf{P}}  \cap\mathsf{W}_{K}^{\mathsf{P}}  )\cup (\mathsf{W}_{K-1}^{\mathsf{P}}  \cap \mathsf{W}_{j}^{\mathsf{P}}  ) \cup (\mathsf{W}_{K}^{\mathsf{P}}  \cap \mathsf{W}_{j}^{\mathsf{P}}  ) }
    \label{Eq:upperBound_2}
\end{align}

By computing the left hand side of the above three inequalities, we get for any $j\in \{1,2,\cdots, K-2\}$
\begin{align}
    &R_{\mathsf{W}_{K-1}^{\mathsf{E}}  \cup\mathsf{W}_{K}^{\mathsf{E}}  }=R_{\overline{\phi}}+R_{\overline{K-1}}+R_{\overline{K}}
    \label{Eq:ComputeUpperBound_LHS_1}\\
    &R_{\mathsf{W}_{K-1}^{\mathsf{E}}  \cup\mathsf{W}_{K}^{\mathsf{E}}  \cup \mathsf{W}_{j}^{\mathsf{E}}  }
    +R_{\mathsf{W}_{K-1}^{\mathsf{E}}  \cap\mathsf{W}_{K}^{\mathsf{E}}    }
    =  \nonumber \\ & \quad
    2R_{\overline{\phi}}+R_{\overline{K-1}}+R_{\overline{K}}+R_{\overline{K-1.K}}
    \label{Eq:ComputeUpperBound_LHS_2}\\
    &R_{\mathsf{W}_{K-1}^{\mathsf{E}}  \cup\mathsf{W}_{K}^{\mathsf{E}}  \cup \mathsf{W}_{j}^{\mathsf{E}}  }
    +R_{ (\mathsf{W}_{K-1}^{\mathsf{E}}  \cap\mathsf{W}_{K}^{\mathsf{E}}  )\cup (\mathsf{W}_{K-1}^{\mathsf{E}}  \cap \mathsf{W}_{j}^{\mathsf{E}}  ) \cup (\mathsf{W}_{K}^{\mathsf{E}}  \cap \mathsf{W}_{j}^{\mathsf{E}}  ) }
    = \nonumber \\ & \quad
    2R_{\overline{\phi}}+2R_{\overline{K-1}}+2R_{\overline{K}}+R_{\overline{K-1.K}}
    \label{Eq:ComputeUpperBound_LHS_3}
\end{align}
These match with the left hand side of \eqref{Eq_Capacity_CN_3} and \eqref{Eq_Capacity_CN_5}-\eqref{Eq_Capacity_CN_6}.

Showing that the right hand sides of \eqref{Eq:UpperBound_0}-\eqref{Eq:upperBound_2} match with that of \eqref{Eq_Capacity_CN_3} and \eqref{Eq_Capacity_CN_5}-\eqref{Eq_Capacity_CN_6}, respectively, is less straightforward. We first use  \eqref{Eq_Lemma_1}-\eqref{Eq_Lemma_6} to rewrite the right hand side of \eqref{Eq_Capacity_CN_3} and \eqref{Eq_Capacity_CN_5}-\eqref{Eq_Capacity_CN_6} in a form that is closer to the generalized cut-set bounds. From \eqref{Eq_Lemma_1}-\eqref{Eq_Lemma_6}, we can show that the following two equalities hold for any set $S=l_1l_2\cdots l_N \subset \{1,2,\cdots,K\}$ and $j\in[1:K]$
\begin{align}
    C_{\downarrow_{\mathsf{W}_j^{\mathsf{P}}} \{\overline{S}\} }&= C_{\mathsf{W}_j^{\mathsf{P}}}- C_{\mathsf{W}_j^{\mathsf{P}} \cap (\mathsf{W}_{l_1}^{\mathsf{P}} \cup \mathsf{W}_{l_2}^{\mathsf{P}}\cup \cdots \cup \mathsf{W}_{l_N}^{\mathsf{P}}       ) } 
    \label{Eq:Equality_Cs_1}
    \\
    C_{\downarrow_{\mathsf{W}_j^{\mathsf{P}}} \{\overline{l_1},\overline{l_2},\cdots, \overline{l_N}\} }&= C_{\mathsf{W}_j^{\mathsf{P}}}- C_{\mathsf{W}_j^{\mathsf{P}} \cap (\mathsf{W}_{l_1}^{\mathsf{P}} \cap \mathsf{W}_{l_2}^{\mathsf{P}}\cap \cdots \cap \mathsf{W}_{l_N}^{\mathsf{P}}       ) } \label{Eq:Equality_Cs_2}
\end{align}
where $C_{\mathsf{W}}=\sum_{S\in \mathsf{W}} C_S$ for any $\mathsf{W}\subseteq \mathsf{P}$. Using \eqref{Eq:Equality_Cs_1}-\eqref{Eq:Equality_Cs_2}, we can rewrite the achievable bounds in \eqref{Eq_Capacity_CN_3} and \eqref{Eq_Capacity_CN_5}-\eqref{Eq_Capacity_CN_6} as follows
\begin{align}
&R_{\overline{\phi}}+ R_{\overline{K-1}}+ R_{\overline{K}} 
\leq 
 C_{\mathsf{W}_{K-1}^{\mathsf{P}}}+ C_{\mathsf{W}_{K}^{\mathsf{P}}}- C_{\mathsf{W}_{K-1}^{\mathsf{P}}\cap \mathsf{W}_{K}^{\mathsf{P}} }
\label{Eq_Capacity_CN_3_modified} 
\\
&2R_{\overline{\phi}}+ R_{\overline{K-1}}+ R_{\overline{K}} +R_{\overline{K-1.K}} 
\leq  \nonumber 
\\ & \qquad C_{\mathsf{W}_j^{\mathsf{P}}  }-C_{  \mathsf{W}_j^{\mathsf{P}}  \cap  (\mathsf{W}_{K-1}^{\mathsf{P}}  \cup \mathsf{W}_{K}^{\mathsf{P}} ) } 
+ C_{\mathsf{W}_{K-1}^{\mathsf{P}}}
+ C_{\mathsf{W}_{K}^{\mathsf{P}}}
\label{Eq_Capacity_CN_5_modified} 
\\
&2R_{\overline{\phi}}+ 2R_{\overline{K-1}}+ 2R_{\overline{K}} +R_{\overline{K-1.K}} 
\leq  \nonumber 
\\ & \qquad C_{\mathsf{W}_j^{\mathsf{P}}  }-C_{  \mathsf{W}_j^{\mathsf{P}}  \cap  (\mathsf{W}_{K-1}^{\mathsf{P}}  \cap \mathsf{W}_{K}^{\mathsf{P}} ) } 
+ C_{\mathsf{W}_{K-1}^{\mathsf{P}}} 
+ C_{\mathsf{W}_{K}^{\mathsf{P}}}
\label{Eq_Capacity_CN_6_modified} 
\end{align}  
By comparing \eqref{Eq:UpperBound_0}-\eqref{Eq:UpperBound_1} with \eqref{Eq_Capacity_CN_3_modified}-\eqref{Eq_Capacity_CN_6_modified}, the new representation of the achievable bound, we can see that the right hand side of both group of inequalities are identical using the extremal inequalities \eqref{Eq:Extremal_inequality_0}-\eqref{Eq:Extremal_inequality_2} for modular function since $C_{\mathsf{W}}$ is modular function for any $\mathsf{W}\subseteq \mathsf{P}$. Hence, the generalized cut-set bounds in \eqref{Eq:UpperBound_0}-\eqref{Eq:UpperBound_1} are tight. This concludes the converse proof.  


\begin{example}
\label{Example_K=4_CN_CapacityRegion}
The capacity region for the complete message set for the four-receiver combination network is not known. We specify the achievable rate region of Theorem \ref{Th_Capacity_combination_networks} for $K=4$ for the diamond message set, but without the structure in the order-theoretic specification of the bounds, as being the explicit polytope in the positive orthant described by the set of rate pairs ($R_{12},R_{123},R_{124},R_{1234}$) that satisfy the following nine linear inequalities 
\begin{align}
    &R_{1234}+R_{124}\leq C_4+C_{14}+C_{24}+C_{34}\nonumber \\ &\hspace{1.5cm} +C_{124}+C_{134}+C_{234}+C_{1234}
    \label{Eq:Example_K=4_Exhaustively_1}\\
    &R_{1234}+R_{123}\leq C_3+C_{13}+C_{23}+C_{34}\nonumber \\ &\hspace{1.5cm} +C_{123}+C_{134}+C_{234}+C_{1234}
    \label{Eq:Example_K=4_Exhaustively_2}\\
    &R_{1234}+R_{124}+R_{123}\leq C_3+C_4+C_{13} 
    \nonumber \\ &\hspace{1.5cm}
    +C_{14}+C_{23}+C_{24}+C_{34}+C_{123}
    \nonumber \\ &\hspace{1.5cm}
    +C_{124}+C_{134}+C_{234}+C_{1234}
    \label{Eq:Example_K=4_Exhaustively_3}\\
    &R_{1234}+R_{124}+R_{123}+R_{12}\leq 
    C_1+C_{12}+C_{13}+C_{14}\nonumber 
    \\ &\hspace{1.5cm} +C_{123}+C_{124}+C_{134}+C_{1234}
    \label{Eq:Example_K=4_Exhaustively_4}\\
    &R_{1234}+R_{124}+R_{123}+R_{12}\leq 
    C_2+C_{12}+C_{23}+C_{24}\nonumber \\
    &\hspace{1.5cm} +C_{123}+C_{124}+C_{234}+C_{1234}
    \label{Eq:Example_K=4_Exhaustively_5}\\
    &2R_{1234}+R_{124}+R_{123}+R_{12}\leq C_1+C_3+C_4+C_{12}
    \nonumber \\ &\hspace{1.5cm}
    +C_{13} +C_{14}+C_{23}+C_{24}+2C_{34}+C_{123}
    \nonumber \\ &\hspace{1.5cm}
    +C_{124}+2C_{134}+2C_{234}+2C_{1234}
    \label{Eq:Example_K=4_Exhaustively_6}\\
    &2R_{1234}+R_{124}+R_{123}+R_{12}\leq C_2+C_3+C_4+C_{12}
    \nonumber \\ &\hspace{1.5cm}
    +C_{13} +C_{14}+C_{23}+C_{24}+2C_{34}+C_{123}
    \nonumber \\ &\hspace{1.5cm}
    +C_{124}+2C_{134}+2C_{234}+2C_{1234}
    \label{Eq:Example_K=4_Exhaustively_7}\\
    &2R_{1234}+2R_{124}+2R_{123}+R_{12}\leq C_1+C_3+C_4+C_{12}
    \nonumber \\ &\hspace{1.5cm}
    +2C_{13} +2C_{14}+C_{23}+C_{24}+2C_{34}+2C_{123}
    \nonumber \\ &\hspace{1.5cm}
    +2C_{124}+2C_{134}+2C_{234}+2C_{1234}
    \label{Eq:Example_K=4_Exhaustively_8}\\
    &2R_{1234}+2R_{124}+2R_{123}+R_{12}\leq C_2+C_3+C_4+C_{12}
    \nonumber \\ &\hspace{1.5cm}
    +C_{13} +C_{14}+2C_{23}+2C_{24}+2C_{34}+2C_{123}
    \nonumber \\ &\hspace{1.5cm}
    +2C_{124}+2C_{134}+2C_{234}+2C_{1234}
    \label{Eq:Example_K=4_Exhaustively_9}
\end{align}

\end{example}

\section{Inner Bound For the DM BC With Binning}
\label{sec:binning}
For the sake of completeness and further investigation, we generalize Theorem \ref{Th_AchRegion_FourMsgs} by enhancing the achievable scheme therein by adding to it the technique of binning.

\begin{theorem}
\label{thm:wbinning}
An inner bound of $K$-user DM BC for the diamond message set with $\mathsf{E}=\{\overline{\phi},\overline{K},\overline{K-1},\overline{K-1.K}  \}$ is the set of non-negative rate tuples ($R_{\overline{\phi}}, R_{\overline{K}},R_{\overline{K-1}},R_{\overline{K-1.K}} $) satisfying the following for all $j,j_1,j_2\in\{1,2,\cdots,K-2\}$
\begin{align}
&R_{\overline{\phi}}+ R_{\overline{K-1}}
\leq 
I(U_{\overline{\phi}},U_{\overline{K-1}};Y_{K})
\label{Eq_Th_binning_1}\\
&R_{\overline{\phi}}+ R_{\overline{K}}
\leq 
I(U_{\overline{\phi}},U_{\overline{K}};Y_{K-1})
\label{Eq_Th_binning_2}\\
&R_{\overline{\phi}}+ R_{\overline{K-1}}+ R_{\overline{K}} 
\leq 
I(U_{\overline{K}};Y_{K-1}|U_{\overline{\phi}}) 
+I(U_{\overline{\phi}},U_{\overline{K-1}};Y_{K})
\nonumber \\  & \hspace{3cm} 
-I(U_{\overline{K}};U_{\overline{K-1}}|U_{\overline{\phi}}) 
\label{Eq_Th_binning_3} 
\\
&R_{\overline{\phi}}+ R_{\overline{K-1}}+ R_{\overline{K}} 
\leq 
I(U_{\overline{K-1}};Y_{K}|U_{\overline{\phi}})
+I(U_{\overline{\phi}},U_{\overline{K}};Y_{K-1})
\nonumber \\  & \hspace{3cm} 
-I(U_{\overline{K}};U_{\overline{K-1}}|U_{\overline{\phi}})
\label{Eq_Th_binning_4} 
\\
&2R_{\overline{\phi}}+ R_{\overline{K-1}}+ R_{\overline{K}}
\leq 
I(U_{\overline{\phi}},U_{\overline{K}};Y_{K-1})
\nonumber \\  & \hspace{1.5cm} 
+I(U_{\overline{\phi}},U_{\overline{K-1}};Y_{K})
-I(U_{\overline{K}};U_{\overline{K-1}}|U_{\overline{\phi}}) 
\label{Eq_Th_binning_5}
\\
&R_{\overline{\phi}}+ R_{\overline{K-1}}+ R_{\overline{K}} +R_{\overline{K-1.K}} 
\leq  
I(X;Y_j) 
\label{Eq_Th_binning_6}
\\
&R_{\overline{\phi}}+ R_{\overline{K-1}}+ R_{\overline{K}} +R_{\overline{K-1.K}} 
\leq  
I(X;Y_j|U_{\overline{\phi}},U_{\overline{K-1}}) \nonumber \\ & \hspace{4.5cm} +I(U_{\overline{\phi}},U_{\overline{K-1}};Y_{K})
\label{Eq_Th_binning_7} 
\\
&R_{\overline{\phi}}+ R_{\overline{K-1}}+ R_{\overline{K}} +R_{\overline{K-1.K}} 
\leq  
I(X;Y_j|U_{\overline{\phi}},U_{\overline{K}}) \nonumber \\ & \hspace{4.5cm} +I(U_{\overline{\phi}},U_{\overline{K}};Y_{K-1})
\label{Eq_Th_binning_8} 
\\
&R_{\overline{\phi}}+ R_{\overline{K-1}}+ R_{\overline{K}} +R_{\overline{K-1.K}} 
\leq  
I(X;Y_j|U_{\overline{\phi}},U_{\overline{K-1}},U_{\overline{K}}) 
\nonumber \\ & \hspace{2.6cm} 
+I(U_{\overline{K}};Y_{K-1}|U_{\overline{\phi}}) +I(U_{\overline{\phi}},U_{\overline{K-1}};Y_{K})
\nonumber \\ & \hspace{4.5cm} 
-I(U_{\overline{K}};U_{\overline{K-1}}|U_{\overline{\phi}}) 
\label{Eq_Th_binning_9} 
\\
&R_{\overline{\phi}}+ R_{\overline{K-1}}+ R_{\overline{K}} +R_{\overline{K-1.K}} 
\leq  
I(X;Y_j|U_{\overline{\phi}},U_{\overline{K-1}},U_{\overline{K}})
\nonumber \\ & \hspace{2.5cm} 
+I(U_{\overline{K-1}};Y_{K}|U_{\overline{\phi}}) +I(U_{\overline{\phi}},U_{\overline{K}};Y_{K-1})
\nonumber \\ & \hspace{4.5cm} 
-I(U_{\overline{K}};U_{\overline{K-1}}|U_{\overline{\phi}}) 
\label{Eq_Th_binning_10} 
\\
&2R_{\overline{\phi}}+ R_{\overline{K-1}}+ R_{\overline{K}} +R_{\overline{K-1.K}} 
\leq  
I(X;Y_j|U_{\overline{\phi}},U_{\overline{K-1}},U_{\overline{K}})
\nonumber \\ & \hspace{2.6cm} 
+I(U_{\overline{\phi}},U_{\overline{K}};Y_{K-1})
+I(U_{\overline{\phi}},U_{\overline{K-1}};Y_{K})
\nonumber \\ & \hspace{4.5cm} 
-I(U_{\overline{K}};U_{\overline{K-1}}|U_{\overline{\phi}}) 
\label{Eq_Th_binning_11} 
\\
&2R_{\overline{\phi}}+ 2R_{\overline{K-1}}+ 2R_{\overline{K}} +R_{\overline{K-1.K}} 
\leq  
I(X;Y_j|U_{\overline{\phi}}) 
\nonumber \\ & \hspace{2.6cm} 
+I(U_{\overline{\phi}},U_{\overline{K}};Y_{K-1})
+I(U_{\overline{\phi}},U_{\overline{K-1}};Y_{K})
\nonumber \\ & \hspace{4.5cm} 
-I(U_{\overline{K}};U_{\overline{K-1}}|U_{\overline{\phi}}) 
\label{Eq_Th_binning_12} 
\\
&2R_{\overline{\phi}}+ 2R_{\overline{K-1}}+ 2R_{\overline{K}} +2R_{\overline{K-1.K}} 
\leq  
I(X;Y_{j_1}|U_{\overline{\phi}}) 
\nonumber \\ & \hspace{2.6cm}+   I(X;Y_{j_2}|U_{\overline{\phi}},U_{\overline{K-1}},U_{\overline{K}}) 
\nonumber \\ & \hspace{2.6cm} 
+I(U_{\overline{\phi}},U_{\overline{K}};Y_{K-1})
+I(U_{\overline{\phi}},U_{\overline{K-1}};Y_{K})
\nonumber \\ & \hspace{4.5cm} 
-I(U_{\overline{K}};U_{\overline{K-1}}|U_{\overline{\phi}}) 
\label{Eq_Th_binning_13}\\
&2R_{\overline{\phi}}+ R_{\overline{K-1}}+ 2R_{\overline{K}} +R_{\overline{K-1.K}} 
\leq  
I(X;Y_j|U_{\overline{\phi}},U_{\overline{K-1}}) 
\nonumber \\ & \hspace{2.6cm}
+I(U_{\overline{\phi}},U_{\overline{K-1}};Y_{K})
+I(U_{\overline{\phi}},U_{\overline{K}};Y_{K-1})
\nonumber \\ & \hspace{4.5cm} 
-I(U_{\overline{K}};U_{\overline{K-1}}|U_{\overline{\phi}}) 
\label{Eq_Th_binning_14} 
\\
&2R_{\overline{\phi}}+ 2R_{\overline{K-1}}+ R_{\overline{K}} + R_{\overline{K-1.K}} 
\leq  
I(X;Y_j|U_{\overline{\phi}},U_{\overline{K}})
\nonumber \\ & \hspace{2.6cm}
+I(U_{\overline{\phi}},U_{\overline{K-1}};Y_{K})
+I(U_{\overline{\phi}},U_{\overline{K}};Y_{K-1})
\nonumber \\ & \hspace{4.5cm} 
-I(U_{\overline{K}};U_{\overline{K-1}}|U_{\overline{\phi}}) 
\label{Eq_Th_binning_15} 
\\
&2R_{\overline{\phi}}+ 2R_{\overline{K-1}}+ 2R_{\overline{K}} +2 R_{\overline{K-1.K}} 
\leq  
I(X;Y_{j_1}|U_{\overline{\phi}},U_{\overline{K}})
\nonumber \\ & \hspace{1.5cm}
+I(X;Y_{j_2}|U_{\overline{\phi}},U_{\overline{K-1}}) +I(U_{\overline{\phi}},U_{\overline{K-1}};Y_{K})
\nonumber \\ & \hspace{1.5cm}
+I(U_{\overline{\phi}},U_{\overline{K}};Y_{K-1}) -I(U_{\overline{K}};U_{\overline{K-1}}|U_{\overline{\phi}}) 
\label{Eq_Th_binning_16} 
\end{align}
for some joint distribution of the auxiliary and input random variables $(U_{\overline{\phi}}, U_{\overline{K}}, U_{\overline{K-1}},X)$ that is of the form
\begin{align}
    p(u_{\overline{\phi}}, u_{\overline{K}}, u_{\overline{K-1}},x) = p(u_{\overline{\phi}}) & \times p(u_{\overline{K}},u_{\overline{K-1}}|u_{\overline{\phi}}) \nonumber \\ & \times p(x|u_{\overline{\phi}},u_{\overline{K}},u_{\overline{K-1}}) 
    \label{pmfstructure_binning}
    \end{align}
\end{theorem}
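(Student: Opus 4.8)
The plan is to enhance the rate-splitting and superposition coding scheme underlying Theorem~\ref{Th_AchRegion_FourMsgs} by introducing Marton-style binning between the two satellite auxiliary codebooks $U_{\overline{K}}$ and $U_{\overline{K-1}}$ that are superimposed on the cloud center $U_{\overline{\phi}}$. The only structural change to the random ensemble is that the distribution constraint is relaxed from the conditionally independent form \eqref{pmfstructure} to the general joint form \eqref{pmfstructure_binning}, in which $U_{\overline{K}}$ and $U_{\overline{K-1}}$ may be arbitrarily correlated given $U_{\overline{\phi}}$. Binning is precisely the device that realizes this correlation, and the penalty term $-I(U_{\overline{K}};U_{\overline{K-1}}|U_{\overline{\phi}})$ appearing in \eqref{Eq_Th_binning_3}--\eqref{Eq_Th_binning_5} and \eqref{Eq_Th_binning_9}--\eqref{Eq_Th_binning_16} is exactly the overhead incurred in making the two satellite codewords jointly typical within their selected bins.

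I would organize the achievability into five steps, the first three mirroring the scheme of Theorem~\ref{Th_AchRegion_FourMsgs}. In \emph{codebook generation}, draw the $U_{\overline{\phi}}^n$ codewords i.i.d.\ from $p(u_{\overline{\phi}})$ and, for each, draw conditionally independent $U_{\overline{K}}^n$ and $U_{\overline{K-1}}^n$ codewords from $p(u_{\overline{K}}|u_{\overline{\phi}})$ and $p(u_{\overline{K-1}}|u_{\overline{\phi}})$ at binning rates that exceed the message-carrying rates, partitioning each satellite codebook into bins indexed by the corresponding reconstructed messages. In \emph{encoding}, apply the same up-set rate-splitting as in Theorem~\ref{Th_AchRegion_FourMsgs}, map the reconstructed $\overline{K}$- and $\overline{K-1}$-messages to bin indices, search within the two indexed bins for a pair $(U_{\overline{K}}^n,U_{\overline{K-1}}^n)$ that is jointly typical given $U_{\overline{\phi}}^n$, and finally generate $X^n$ from $p(x|u_{\overline{\phi}},u_{\overline{K}},u_{\overline{K-1}})$. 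In \emph{decoding}, receivers $Y_1,\ldots,Y_{K-2}$ jointly and uniquely decode all of $(U_{\overline{\phi}},U_{\overline{K}},U_{\overline{K-1}},X)$, while $Y_{K-1}$ decodes $(U_{\overline{\phi}},U_{\overline{K}})$ and $Y_K$ decodes $(U_{\overline{\phi}},U_{\overline{K-1}})$, exactly as in the non-binned scheme.

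The fourth step is the \emph{error analysis}, which now separates into a covering (encoding) error and the packing (decoding) errors. By the mutual covering lemma, the search for a jointly typical satellite pair succeeds with high probability provided the sum of the two binning overhead rates exceeds $I(U_{\overline{K}};U_{\overline{K-1}}|U_{\overline{\phi}})$; this single covering constraint is the common origin of every negative penalty term in the statement. The packing analysis at each receiver reproduces inequalities of the same form as \eqref{AchRegion_FME_0_j_1}--\eqref{AchRegion_FME_0_K_2}, except that the split rates carried by the $\overline{K}$- and $\overline{K-1}$-messages are replaced by the larger binning rates, with the covering constraint coupling them. Together these describe a polytope in the original rates, the five split rates, and the binning overhead rates.

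The fifth step, which I expect to be the main obstacle, is the \emph{Fourier--Motzkin elimination} of the split and binning auxiliary rates to recover the four-dimensional region \eqref{Eq_Th_binning_1}--\eqref{Eq_Th_binning_16}. This is strictly harder than the elimination in Appendix~\ref{Appendix_FME_TH1} for Theorem~\ref{Th_AchRegion_FourMsgs}: there is now an additional coupled covering inequality, the penalty $-I(U_{\overline{K}};U_{\overline{K-1}}|U_{\overline{\phi}})$ must be threaded correctly through every Fourier--Motzkin combination, and four genuinely new inequalities---\eqref{Eq_Th_binning_5} and \eqref{Eq_Th_binning_14}--\eqref{Eq_Th_binning_16}---emerge that have no counterpart in Theorem~\ref{Th_AchRegion_FourMsgs}. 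As noted in Remark~\ref{rem:fme}, the indeterminate number of inequalities (because $K$ is arbitrary) forces the elimination to be carried out on groups of inequalities symbolically rather than by enumeration, so a careful choice of elimination order together with systematic identification and pruning of redundant inequalities at each stage will be essential; these details are deferred to Appendix~\ref{app:thmbinning}.
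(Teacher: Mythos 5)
Your proposal is correct and follows essentially the same route as the paper's proof in Appendix \ref{app:thmbinning}: the same rate-splitting and superposition scheme of Theorem \ref{Th_AchRegion_FourMsgs}, augmented with Marton-style binning of the two satellite codebooks generated independently at excess rates $\tilde{R}_{\overline{K}},\tilde{R}_{\overline{K-1}}$, the mutual covering lemma supplying the single constraint that produces every $-I(U_{\overline{K}};U_{\overline{K-1}}|U_{\overline{\phi}})$ penalty term, identical decoding assignments at all receivers, and a final Fourier--Motzkin elimination of the split and binning rates from the resulting eleven-dimensional polytope. The paper likewise defers the details of this extended elimination, so there is no gap between your outline and its argument.
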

\begin{proof}
An outline of the proof is given in Appendix \ref{app:thmbinning}.
\end{proof}

\begin{remark}
By setting $I(U_{\overline{K}};U_{\overline{K-1}}|U_{\overline{\phi}}) =0$ (i.e., no binning), we get that \eqref{Eq_Th_binning_5} is redundant from \eqref{Eq_Th_binning_1} and \eqref{Eq_Th_binning_2}, \eqref{Eq_Th_binning_14} is redundant from \eqref{Eq_Th_binning_7} and \eqref{Eq_Th_binning_2}, \eqref{Eq_Th_binning_15} is redundant from \eqref{Eq_Th_binning_8} and \eqref{Eq_Th_binning_1}, and \eqref{Eq_Th_binning_16} is redundant from \eqref{Eq_Th_binning_7} and \eqref{Eq_Th_binning_8}, thereby recovering Theorem \ref{Th_AchRegion_FourMsgs}. While the greater generality offered by binning is not necessary in establishing the capacity region of the combination network for the diamond message set it is an interesting open question as to whether the above achievable rate region is tight for say the more general deterministic DM BC where each output $Y_i$ is some deterministic function of $g_i(X)$.
\end{remark}

\begin{remark}
In the achievable rate region for the $K=2$ case of Theorem \ref{thm:wbinning} obtained for the complete message set $\mathsf{E}=\{1,2,12\}$ by setting $R_{\overline{K-1.K}}=0$ (in this case $X$ is a function of $U_1,U_2$ and  $U_{12}$) and $Y_j={\rm const.}$ for $j \in [3:K]$ it is easily seen that all but the first five inequalities \eqref{Eq_Th_binning_1}-\eqref{Eq_Th_binning_5} are redundant. In particular, this special case of Theorem \ref{thm:wbinning} recovers \cite[Theorem 5]{liang2007rate}, where it was first obtained, as it must, and is also given in \cite[Theorem 8.4]{el2011network}. 
\end{remark}

\begin{remark} The special case of $K=3$ and three degraded messages with $\mathsf{E}= \{1,12,123\} $ which is a subset of the diamond message set $\mathsf{E}= \{1,12,13, 123\} $ can be deduced by setting $K=3$ and $R_{13}=0$ in Theorem \ref{thm:wbinning}. This special case was also addressed in \cite[Theorem 2]{nair2009capacity}. There are some notable differences between the two regions however\footnote{There is a typographical error in the last inequality in the rate region given in \cite[Theorem 2]{nair2009capacity}: the term $I(U_{12};Y_2|U_{123})$ in the bound should instead be $I(U_{12},U_{123};Y_2)$.}. First, the distributions for the input and auxiliary random variables is taken to belong to a more restrictive class in \cite[Theorem 2]{nair2009capacity} that can be obtained by setting $U_{123}=U$, $(U_{123}, U_{12})=V_2$ and $(U_{123}, U_{13})=V_3$ and restricting to joint distributions wherein we have the Markov chains $ U \markov V_2 \markov (V_3, X)$ and  $ U \markov V_3 \markov (V_2, X)$ which is a stronger restriction than the one herein which is that  $ U_{123} \markov (U_{12}, U_{13})  \markov X$. However, the restriction in the class of coding distributions in \cite[Theorem 2]{nair2009capacity} is without loss of generality as argued in \cite{nair2009capacity}. More importantly however, the achievable region of \cite[Theorem 2]{nair2009capacity} corresponds to one wherein Receiver 3 decodes the common message $M_{123}$ non-uniquely via $V_3$ whereas in the specialization of Theorem \ref{thm:wbinning} Receiver 3 decodes the common message uniquely via $U_{123}$. This results in more inequalities in the region of Theorem \ref{thm:wbinning} (when specialized to $K=3$ and $\mathsf{E}= \{1,12,123\} $) than the ones in \cite[Theorem 2]{nair2009capacity}. However, it is left to the reader to verify along the lines of \cite[Proposition 8]{nair2009capacity} that unique and non-unique decoding at Receiver 3 result in the same region per coding distribution. In effect, the additional inequalities due to unique-decoding are redundant.
In particular, \cite[Proposition 8]{nair2009capacity} shows that the achievable region of \cite[Theorem 2]{nair2009capacity} for the three degraded message set in which Receiver 2 decodes the common message uniquely and Receiver 3 decodes the common message non-uniquely, when specialized to the two degraded message set $\mathsf{E}= \{1, 123\} $ (by setting $R_{12}=0$), is equivalent to the achievable rate region of \cite[Proposition 5]{nair2009capacity} obtained by having both Receivers 2 and 3 decode the common message non-uniqely.
\end{remark}

\section{Conclusion}
\label{Sec_Conc}
In this paper, we propose an inner bound based on rate-splitting and superposition coding in explicit form for the $K$-user DM BC as a union of four-dimensional polytopes described in terms of the rates of the four messages. 
By specializing it to the so-called combination network through the choice of a single yet optimal coding distribution we obtain a single polytope which is then shown to be the capacity region of that network for the diamond message set. Our converse proof uses the generalized cut-set bound framework of \cite{salimi2015generalized} to recognize, by developing key connections between our order-theoretic description of the inner bound and the 
set-theoretic description of the generalized cut-set bounds, the inequalities in the inner bound as indeed being generalized cut-set outer bounds. Most significantly, our interpretation of this capacity result is that it suggests a certain strength of the coding scheme for the {\em general} DM BC itself, from whence it was obtained. Put differently, our top-down approach allows us to conclude that rate-splitting and superposition coding (with subset inclusion order) results in a strong inner bound for the DM BC with the diamond message set as opposed to presenting the more conventional view that rate-splitting and superposition coding may not be strong enough to attain the capacity of the entire DM BC for the diamond message set. Nevertheless, we extend the rate-splitting and superposition coding inner bound by adding binning to it and obtain a rate region that is again given as a union of explicit four-dimensional polytopes in terms of the four message rates.

Moreover, we expect that the proposed top-town approach, initiated in \cite{romero2016superposition} and continued in \cite{salman2018achievableAR} and the present work, can be used to obtain inner bounds for the DM BC for message sets other than the diamond message set, and validate them as being strong, provided that, when specialized to the combination network, they establish its capacity region. 


\begin{appendices}
\section{The FME for Theorem \ref{Th_AchRegion_FourMsgs}}
\label{Appendix_FME_TH1}


Besides the inequalities in \eqref{AchRegion_FME_0_j_1}-\eqref{AchRegion_FME_0_K_2}, 
we have the following inequalities to ensure the non-negativity of the split rates
\begin{align}
    & -R_{\overline{K-1.K}}+ \nonumber \\
    & \qquad R_{\overline{K-1.K}\rightarrow\overline{K-1}}+R_{\overline{K-1.K}\rightarrow\overline{K}}+R_{\overline{K-1.K}\rightarrow\overline{\phi}}\leq 0
    \label{AchRegion_FME_0_positiveSplit_1}\\
    &-R_{\overline{K-1.K}\rightarrow\overline{K-1}} \leq 0
    \label{AchRegion_FME_0_positiveSplit_2}\\
    &-R_{\overline{K-1.K}\rightarrow\overline{K}} \leq 0
    \label{AchRegion_FME_0_positiveSplit_3}\\
    &-R_{\overline{K-1.K}\rightarrow\overline{\phi}} \leq 0
    \label{AchRegion_FME_0_positiveSplit_4}\\
    & -R_{\overline{K}}+R_{\overline{K}\rightarrow\overline{\phi}}\leq 0
    \label{AchRegion_FME_0_positiveSplit_5}\\
    &-R_{\overline{K}\rightarrow\overline{\phi}} \leq 0
    \label{AchRegion_FME_0_positiveSplit_6}\\
    & -R_{\overline{K-1}}+R_{\overline{K-1}\rightarrow\overline{\phi}}\leq 0
    \label{AchRegion_FME_0_positiveSplit_7}\\
    &-R_{\overline{K-1}\rightarrow\overline{\phi}} \leq 0
    \label{AchRegion_FME_0_positiveSplit_8}
\end{align}

The problem is to project the above nine-dimensional polytope in original- and split-rates space (described by the inequalities \eqref{AchRegion_FME_0_j_1}-\eqref{AchRegion_FME_0_K_2} and \eqref{AchRegion_FME_0_positiveSplit_1}-\eqref{AchRegion_FME_0_positiveSplit_8}) onto the space of the four original rates using the FME procedure. The size of the problem is large in that there are a large number of (groups of) inequalities and as many as five split rates to be eliminated. Performing the FME procedure correctly for a problem of this size in an error-free manner can be an arduous task. In particular, the order in which the split rates are eliminated appears to have a strong bearing on whether the task can be completed at all, with reasonable effort. We propose an order that we were not only able to use to perform the projection to its completion but that also provides opportunities for identifying errors in the process early enough and hence correcting them before proceeding further. In particular, we propose to do the projection in five steps in the following order:
\begin{enumerate}
    \item $R_{ 
\overline{K-1} \rightarrow \overline{\phi}  }$
    \item $R_{ 
\overline{K} \rightarrow \overline{\phi}  }$
    \item $R_{ 
\overline{K-1.K} \rightarrow \overline{K}}  $
    \item $R_{ 
\overline{K-1.K} \rightarrow \overline{K-1}}  $
    \item $R_{ 
\overline{K-1.K} \rightarrow \overline{\phi}  }$
\end{enumerate}
Note that this order maintains a certain symmetric structure in the problem, in that after projecting away the first two split rates in the above list we expect a symmetry in the description of the resulting seven-dimensional polytope around $K$ and $K-1$ (i.e., exchanging $K$ and $K-1$ should not change it). Similarly, such a symmetry should result again after projecting away the first four split rates in the above list, 
and eventually after projecting away the fifth split rate that yields the four-dimensional polytopes specified in Theorem  \ref{Th_AchRegion_FourMsgs}, which is indeed symmetric in the sense described above.

Note also that the number of inequalities in \eqref{AchRegion_FME_0_j_1}-\eqref{AchRegion_FME_0_j_5} is indeterminate since it is function of $K$. Hence, we deal with these groups of inequalities at once, 
instead of exhaustively listing them. 
Moreover, at the end of each projection to eliminate a split rate, it is important to remove all redundant inequalities so that the FME procedure remains tractable in future steps. 
We do not address the issue of why inequalities that are not retained at the end of each step are redundant, leaving this task at each step (beyond step 2, when they arise) to the interested reader. Judicious applications of the data processing inequality and the Markov chain relationships between the involved random variables are all that are needed. The 
task of identifying which inequalities are redundant is, however, solved for the reader at each step in that these are precisely the inequalities that we omit from including in the projected reduced-dimensional region given at the end of each step.


In the first two steps, we project away the split rates $R_{\overline{K-1}\rightarrow \overline{\phi}}$ and $R_{\overline{K}\rightarrow\overline{\phi}}$
and obtain, after rearranging the inequalities, the following inequalities with $j\in\{1,2,\cdots, K-2\}$
\begin{align}
    &R_{\overline{\phi}}+R_{\overline{K-1}}+R_{\overline{K}}+R_{\overline{K-1.K}} \nonumber \\
    &\quad \leq I(X;Y_j|U_{\overline{\phi}},U_{\overline{K}})+I(U_{\overline{\phi}},U_{\overline{K}};Y_{K-1})
    \label{AchRegion_FME_2_1}\\
    &R_{\overline{\phi}}+R_{\overline{K-1}}+R_{\overline{K}}+R_{\overline{K-1.K}} \nonumber \\
    &\quad \leq I(X;Y_j|U_{\overline{\phi}},U_{\overline{K-1}})+I(U_{\overline{\phi}},U_{\overline{K-1}};Y_{K})
    \label{AchRegion_FME_2_2}\\
    &R_{\overline{\phi}}+R_{\overline{K-1}}+R_{\overline{K}}+R_{\overline{K-1.K}} \leq 
    I(X;Y_j) 
    \label{AchRegion_FME_2_3}\\
    & R_{\overline{K-1.K}}-R_{\overline{K-1.K}\rightarrow \overline{\phi}} \leq 
    I(X;Y_j|U_{\overline{\phi}})
    \label{AchRegion_FME_2_4}\\
    &R_{\overline{K-1.K}}-R_{\overline{K-1.K}\rightarrow \overline{\phi}}-R_{\overline{K-1.K}\rightarrow \overline{K-1}}\nonumber \\
    &\quad \leq 
     I(X;Y_j|U_{\overline{\phi}},U_{\overline{K-1}})
    \label{AchRegion_FME_2_5}\\
    &R_{\overline{K-1.K}}-R_{\overline{K-1.K}\rightarrow \overline{\phi}}-R_{\overline{K-1.K}\rightarrow \overline{K}} \leq 
     I(X;Y_j|U_{\overline{\phi}},U_{\overline{K}})
    \label{AchRegion_FME_2_6}\\
    \nonumber 
    \\
    & R_{\overline{K-1.K}\rightarrow \overline{K-1}} \leq  I(U_{\overline{K-1}};Y_{K}|U_{\overline{\phi}})
    \label{AchRegion_FME_2_7}\\
    & R_{\overline{K-1.K}\rightarrow \overline{K}} \leq  I(U_{\overline{K}};Y_{K-1}|U_{\overline{\phi}})
    \label{AchRegion_FME_2_8}\\
    \nonumber 
    \\
    &R_{\overline{\phi}}+R_{\overline{K-1}} +R_{\overline{K-1.K} \rightarrow \overline{\phi}}+R_{\overline{K-1.K}\rightarrow \overline{K-1}}\nonumber \\
    &\quad \leq 
     I(U_{\overline{\phi}},U_{\overline{K-1}};Y_{K})
    \label{AchRegion_FME_2_9}\\
    &R_{\overline{\phi}}+R_{\overline{K}} +R_{\overline{K-1.K} \rightarrow \overline{\phi}}+R_{\overline{K-1.K}\rightarrow \overline{K}}\nonumber \\
    &\quad \leq 
     I(U_{\overline{\phi}},U_{\overline{K}};Y_{K-1})
    \label{AchRegion_FME_2_10}\\
    \nonumber 
    \\
    &R_{\overline{\phi}}+R_{\overline{K-1}} +R_{\overline{K}}+ R_{\overline{K-1.K}} +R_{\overline{K-1.K}\rightarrow \overline{K-1}}\nonumber \\
    &\quad \leq 
     I(X;Y_j|U_{\overline{\phi}})+ I(U_{\overline{\phi}},U_{\overline{K-1}};Y_{K})
    \label{AchRegion_FME_2_11}\\
    &R_{\overline{\phi}}+R_{\overline{K-1}} +R_{\overline{K}}+ R_{\overline{K-1.K}} +R_{\overline{K-1.K}\rightarrow \overline{K}}\nonumber \\
    &\quad \leq 
     I(X;Y_j|U_{\overline{\phi}})+ I(U_{\overline{\phi}},U_{\overline{K}};Y_{K-1})
    \label{AchRegion_FME_2_12}\\
    \nonumber 
    \\
     &R_{\overline{\phi}}+R_{\overline{K-1}} +R_{\overline{K}}+ 
     \nonumber \\ 
     & \quad  R_{\overline{K-1.K}\rightarrow \overline{\phi}}+ R_{\overline{K-1.K}\rightarrow \overline{K-1}} +R_{\overline{K-1.K}\rightarrow \overline{K}}\nonumber \\
    &\quad \leq 
      I(U_{\overline{K}};Y_{K-1}|U_{\overline{\phi}})+ I(U_{\overline{\phi}},U_{\overline{K-1}};Y_{K})
    \label{AchRegion_FME_2_13}\\
    &R_{\overline{\phi}}+R_{\overline{K-1}} +R_{\overline{K}}+ 
     \nonumber \\ 
     & \quad  R_{\overline{K-1.K}\rightarrow \overline{\phi}}+ R_{\overline{K-1.K}\rightarrow \overline{K-1}} +R_{\overline{K-1.K}\rightarrow \overline{K}}\nonumber \\
    &\quad \leq 
      I(U_{\overline{K-1}};Y_{K}|U_{\overline{\phi}})+ I(U_{\overline{\phi}},U_{\overline{K}};Y_{K-1})
    \label{AchRegion_FME_2_14}\\
    \nonumber 
    \\
    &R_{\overline{K-1.K}}-R_{\overline{K-1.K}\rightarrow \overline{\phi}}- R_{\overline{K-1.K}\rightarrow \overline{K-1}} -R_{\overline{K-1.K}\rightarrow \overline{K}}\nonumber \\
    &\quad \leq 
    I(X;Y_j|U_{\overline{\phi}},U_{\overline{K-1}},U_{\overline{K}})
    \nonumber 
    \\
    \label{AchRegion_FME_2_15}\\
    &2R_{\overline{\phi}}+2R_{\overline{K-1}} +2R_{\overline{K}}+ \nonumber \\ & \quad R_{\overline{K-1.K}}+R_{\overline{K-1.K}\rightarrow \overline{\phi}}+ R_{\overline{K-1.K}\rightarrow \overline{K-1}} +R_{\overline{K-1.K}\rightarrow \overline{K}}\nonumber \\
    &\quad \leq I(X;Y_j|U_{\overline{\phi}}) +I(U_{\overline{\phi}},U_{\overline{K}};Y_{K-1})+I(U_{\overline{\phi}},U_{\overline{K-1}};Y_{K})
    \label{AchRegion_FME_2_16}\\
     \nonumber 
     \\
        & -R_{\overline{K-1.K}}+ \nonumber \\
    & \qquad R_{\overline{K-1.K}\rightarrow\overline{K-1}}+R_{\overline{K-1.K}\rightarrow\overline{K}}+R_{\overline{K-1.K}\rightarrow\overline{\phi}}\leq 0
    \label{AchRegion_FME_2_17}\\
    &-R_{\overline{K-1.K}\rightarrow\overline{K-1}} \leq 0
    \label{AchRegion_FME_2_18}\\
    &-R_{\overline{K-1.K}\rightarrow\overline{K}} \leq 0
    \label{AchRegion_FME_2_19}\\
    &-R_{\overline{K-1.K}\rightarrow\overline{\phi}} \leq 0
    \label{AchRegion_FME_2_20}
\end{align}

Observe that the above seven-dimensional polytope remains unchanged if $K-1$ and $K$ are exchanged, as one might expect.
Indeed, observe this symmetry in every group of inequalities separated by a new line.  There were no redundant inequalities in the first two steps.

In the third step, we project away the split rate $R_{\overline{K-1.K}\rightarrow\overline{K}}$.
Notice now that we have a group of inequalities in \eqref{AchRegion_FME_2_15} which give lower bounds for $R_{\overline{K-1.K}\rightarrow\overline{K}}$ and a group of inequalities in \eqref{AchRegion_FME_2_12} and \eqref{AchRegion_FME_2_16} that give upper bounds for $R_{\overline{K-1.K}\rightarrow\overline{K}}$. Hence, to project the split rate $R_{\overline{K-1.K}\rightarrow\overline{K}}$ from these, we have to introduce new variables $j_1,j_2\in\{1,2,\cdots,K-2\}$ since we get a multiplicative increase in the number of inequalities.
By projecting away the split rate $R_{\overline{K-1.K}\rightarrow\overline{K}}$, we obtain, by rearranging the inequalities and getting rid of the redundant 
ones
\footnote{Note that we get redundant inequalities by eliminating the split rate $R_{\overline{K-1.K}\rightarrow\overline{K}}$ from \eqref{AchRegion_FME_2_6} and \eqref{AchRegion_FME_2_10}, \eqref{AchRegion_FME_2_12}, \eqref{AchRegion_FME_2_14}, and \eqref{AchRegion_FME_2_16}, and also from \eqref{AchRegion_FME_2_19} and \eqref{AchRegion_FME_2_12}.} 
, the following for $j,j_1,j_2\in\{1,2,\cdots,K-2\}$
\begin{align}
    &R_{\overline{\phi}}+R_{\overline{K-1}}+R_{\overline{K}}+R_{\overline{K-1.K}} \nonumber \\
    &\quad \leq I(X;Y_j|U_{\overline{\phi}},U_{\overline{K}})+I(U_{\overline{\phi}},U_{\overline{K}};Y_{K-1})
    \label{AchRegion_FME_3_1}\\
    &R_{\overline{\phi}}+R_{\overline{K-1}}+R_{\overline{K}}+R_{\overline{K-1.K}} \nonumber \\
    &\quad \leq I(X;Y_j|U_{\overline{\phi}},U_{\overline{K-1}})+I(U_{\overline{\phi}},U_{\overline{K-1}};Y_{K})
    \label{AchRegion_FME_3_2}\\
    &R_{\overline{\phi}}+R_{\overline{K-1}}+R_{\overline{K}}+R_{\overline{K-1.K}}  \leq 
    I(X;Y_j) 
    \label{AchRegion_FME_3_3}\\
    &R_{\overline{\phi}}+ R_{\overline{K-1}}+ R_{\overline{K}} +R_{\overline{K-1.K}} 
    \nonumber \\ 
    & \quad \leq 
    I(X;Y_j|U_{\overline{\phi}},U_{\overline{K-1}},U_{\overline{K}}) \nonumber \\ 
    & \quad +I(U_{\overline{K}};Y_{K-1}|U_{\overline{\phi}}) +I(U_{\overline{\phi}},U_{\overline{K-1}};Y_{K})
    \label{AchRegion_FME_3_4}\\
    &R_{\overline{\phi}}+ R_{\overline{K-1}}+ R_{\overline{K}} +R_{\overline{K-1.K}} \nonumber \\
    & \quad \leq  
    I(X;Y_j|U_{\overline{\phi}},U_{\overline{K-1}},U_{\overline{K}}) \nonumber \\ & 
    \quad +I(U_{\overline{K-1}};Y_{K}|U_{\overline{\phi}}) +I(U_{\overline{\phi}},U_{\overline{K}};Y_{K-1})
    \label{AchRegion_FME_3_5}\\
    &2R_{\overline{\phi}}+ 2R_{\overline{K-1}}+ 2R_{\overline{K}} +2R_{\overline{K-1.K}} \nonumber \\
    & \quad \leq  
    I(X;Y_{j_1}|U_{\overline{\phi}}) 
    +   I(X;Y_{j_2}|U_{\overline{\phi}},U_{\overline{K-1}},U_{\overline{K}})
    \nonumber \\ & \quad
    +I(U_{\overline{\phi}},U_{\overline{K}};Y_{K-1})+I(U_{\overline{\phi}},U_{\overline{K-1}};Y_{K})
    \label{AchRegion_FME_3_6}\\
    &R_{\overline{K-1.K}}-R_{\overline{K-1.K} \rightarrow \overline{\phi}} \leq I(X;Y_j|U_{\overline{\phi}})
    \label{AchRegion_FME_3_7}\\
    &R_{\overline{K-1.K}}-R_{\overline{K-1.K} \rightarrow \overline{\phi}}\nonumber \\
    &\quad \leq I(X;Y_j|U_{\overline{\phi}},U_{\overline{K}}) + I(U_{\overline{K}};Y_{K-1}|U_{\overline{\phi}})
    \label{AchRegion_FME_3_8}\\
    &R_{\overline{K-1.K}}-R_{\overline{K-1.K} \rightarrow \overline{\phi}}-R_{\overline{K-1.K} \rightarrow \overline{K-1}} \nonumber \\
    &\quad \leq I(X;Y_j|U_{\overline{\phi}},U_{\overline{K-1}})
    \label{AchRegion_FME_3_9}\\
    & R_{\overline{K-1.K}}-R_{\overline{K-1.K} \rightarrow \overline{\phi}}-R_{\overline{K-1.K} \rightarrow \overline{K-1}} \nonumber \\
    &\quad \leq 
    I(X;Y_j|U_{\overline{\phi}},U_{\overline{K-1}},U_{\overline{K}})+ I(U_{\overline{K}};Y_{K-1}|U_{\overline{\phi}})
     \label{AchRegion_FME_3_10}\\
    & R_{\overline{\phi}} +R_{\overline{K}}+R_{\overline{K-1.K}}-R_{\overline{K-1.K} \rightarrow \overline{K-1}} \nonumber \\
    & \quad \leq 
     I(X;Y_j|U_{\overline{\phi}},U_{\overline{K-1}},U_{\overline{K}})+ I(U_{\overline{\phi}},U_{\overline{K}};Y_{K-1})
    \label{AchRegion_FME_3_11}\\
    & R_{\overline{\phi}} +R_{\overline{K-1}}+R_{\overline{K}}+  2R_{\overline{K-1.K}}-R_{\overline{K-1.K} \rightarrow \overline{\phi}}- \nonumber \\   & \quad R_{\overline{K-1.K} \rightarrow \overline{K-1}}  \leq I(X;Y_{j_1}|U_{\overline{\phi}})+ \nonumber \\ 
    & \quad  I(X;Y_{j_2}|U_{\overline{\phi}},U_{\overline{K-1}},U_{\overline{K}})
    +I(U_{\overline{\phi}},U_{\overline{K}};Y_{K-1})
    \label{AchRegion_FME_3_12}\\
    & R_{\overline{K-1.K} \rightarrow \overline{K-1}} \leq  I(X;Y_j|U_{\overline{\phi}},U_{\overline{K-1}})
    \label{AchRegion_FME_3_13}\\
    & R_{\overline{K-1.K} \rightarrow \overline{K-1}} \leq   I(U_{\overline{K-1}};Y_{K}|U_{\overline{\phi}})
    \label{AchRegion_FME_3_14}\\
    & R_{\overline{\phi}} +R_{\overline{K}}+ R_{\overline{K-1.K}\rightarrow \overline{\phi}} \leq 
     I(U_{\overline{\phi}},U_{\overline{K}};Y_{K-1})
    \label{AchRegion_FME_3_15}\\
    &R_{\overline{\phi}} +R_{\overline{K-1}}+R_{\overline{K}}+R_{\overline{K-1.K}} +R_{\overline{K-1.K}\rightarrow \overline{K-1}}\nonumber \\
    &\quad \leq 
    I(X;Y_j|U_{\overline{\phi}})+ I(U_{\overline{\phi}},U_{\overline{K-1}};Y_{K})
    \label{AchRegion_FME_3_16}\\
    &R_{\overline{\phi}} +R_{\overline{K-1}}+R_{\overline{K}}+R_{\overline{K-1.K}} +R_{\overline{K-1.K}\rightarrow \overline{K-1}}\nonumber \\
    &\quad \leq I(X;Y_j|U_{\overline{\phi}},U_{\overline{K}})\nonumber \\
    & \quad + I(U_{\overline{K}};Y_{K-1}|U_{\overline{\phi}})+
    I(U_{\overline{\phi}},U_{\overline{K-1}};Y_{K})
    \label{AchRegion_FME_3_17}\\
    &R_{\overline{\phi}} +R_{\overline{K-1}}+R_{\overline{K-1.K}\rightarrow \overline{\phi}} +R_{\overline{K-1.K}\rightarrow \overline{K-1}}\nonumber \\
    &\quad \leq I(U_{\overline{\phi}},U_{\overline{K-1}};Y_{K})
    \label{AchRegion_FME_3_18}\\
    &R_{\overline{\phi}} +R_{\overline{K-1}}+R_{\overline{K}} +R_{\overline{K-1.K}\rightarrow \overline{\phi}} +R_{\overline{K-1.K}\rightarrow \overline{K-1}}\nonumber \\
    &\quad \leq I(U_{\overline{K}};Y_{K-1}|U_{\overline{\phi}})+ I(U_{\overline{\phi}},U_{\overline{K-1}};Y_{K})
    \label{AchRegion_FME_3_19}\\
    &R_{\overline{\phi}} +R_{\overline{K-1}}+R_{\overline{K}} +R_{\overline{K-1.K}\rightarrow \overline{\phi}} +R_{\overline{K-1.K}\rightarrow \overline{K-1}}\nonumber \\
    &\quad \leq I(U_{\overline{K-1}};Y_{K}|U_{\overline{\phi}})+ I(U_{\overline{\phi}},U_{\overline{K}};Y_{K-1})
    \label{AchRegion_FME_3_20}\\
    &2R_{\overline{\phi}} +2R_{\overline{K-1}}+2R_{\overline{K}} +R_{\overline{K-1.K}} +R_{\overline{K-1.K}\rightarrow \overline{\phi}} \nonumber \\
    &\quad +R_{\overline{K-1.K}\rightarrow \overline{K-1}} \leq 
    I(X;Y_j|U_{\overline{\phi}}) \nonumber \\ 
    & \quad +I(U_{\overline{\phi}},U_{\overline{K}};Y_{K-1})+ I(U_{\overline{\phi}},U_{\overline{K-1}};Y_{K})
    \label{AchRegion_FME_3_21}\\
    & -R_{\overline{K-1.K}}+  R_{\overline{K-1.K}\rightarrow\overline{K-1}}+R_{\overline{K-1.K}\rightarrow\overline{\phi}}\leq 0
    \label{AchRegion_FME_3_22}\\
    &-R_{\overline{K-1.K}\rightarrow\overline{K-1}} \leq 0
    \label{AchRegion_FME_3_23}\\
    &-R_{\overline{K-1.K}\rightarrow\overline{\phi}} \leq 0
    \label{AchRegion_FME_3_24}
\end{align}

In the fourth step, we project away the split rate $R_{\overline{K-1.K}\rightarrow\overline{K-1}}$ from the polytope described by the inequalities \eqref{AchRegion_FME_3_1}-\eqref{AchRegion_FME_3_24}. Note that the inequalities \eqref{AchRegion_FME_3_9}-\eqref{AchRegion_FME_3_12} and \eqref{AchRegion_FME_3_23} give lower bounds and the inequalities  \eqref{AchRegion_FME_3_13}-\eqref{AchRegion_FME_3_14}, and \eqref{AchRegion_FME_3_16}-\eqref{AchRegion_FME_3_22} give upper bounds on $R_{\overline{K-1.K}\rightarrow\overline{K-1}}$. Hence, we have $K^2{-}K{-}1$ lower bounds and $4K{-}3$ upper bounds with the rest of the inequalities remaining unchanged.

Projecting away the split rate $R_{\overline{K-1.K}\rightarrow\overline{K-1}}$ from the aforementioned inequalities, in which there are five lower bounds and nine upper bounds, counting a group of inequalities as one inequality, we obtain 45 groups of inequalities (including groups of size one). However, and we leave it to the reader to show this, the nine inequalities, besides the unchanged inequalities, shown next, are the only ones that are not redundant. Hence, we get the five-dimensional polytope described by the following inequalities for all $j,j_1,j_2\in\{1,2,\cdots,K-2\}$
\begin{align}
    &R_{\overline{\phi}}+R_{\overline{K-1}}+R_{\overline{K}}+R_{\overline{K-1.K}} \nonumber \\
    &\quad \leq I(X;Y_j|U_{\overline{\phi}},U_{\overline{K}})+I(U_{\overline{\phi}},U_{\overline{K}};Y_{K-1})
    \label{AchRegion_FME_4_1}\\
    &R_{\overline{\phi}}+R_{\overline{K-1}}+R_{\overline{K}}+R_{\overline{K-1.K}} \nonumber \\
    &\quad \leq I(X;Y_j|U_{\overline{\phi}},U_{\overline{K-1}})+I(U_{\overline{\phi}},U_{\overline{K-1}};Y_{K})
    \label{AchRegion_FME_4_2}\\
    \nonumber 
    \\
    &R_{\overline{\phi}}+R_{\overline{K-1}}+R_{\overline{K}}+R_{\overline{K-1.K}} 
    \leq 
    I(X;Y_j) 
    \label{AchRegion_FME_4_3}\\
    &R_{\overline{\phi}}+ R_{\overline{K-1}}+ R_{\overline{K}} +R_{\overline{K-1.K}} 
    \nonumber \\ 
    & \quad \leq 
    I(X;Y_j|U_{\overline{\phi}},U_{\overline{K-1}},U_{\overline{K}}) \nonumber \\ 
    & \quad +I(U_{\overline{K}};Y_{K-1}|U_{\overline{\phi}}) +I(U_{\overline{\phi}},U_{\overline{K-1}};Y_{K})
    \label{AchRegion_FME_4_4}\\
    \nonumber 
    \\
    &R_{\overline{\phi}}+ R_{\overline{K-1}}+ R_{\overline{K}} +R_{\overline{K-1.K}} \nonumber \\
    & \quad \leq  
    I(X;Y_j|U_{\overline{\phi}},U_{\overline{K-1}},U_{\overline{K}}) \nonumber \\ & 
    \quad +I(U_{\overline{K-1}};Y_{K}|U_{\overline{\phi}}) +I(U_{\overline{\phi}},U_{\overline{K}};Y_{K-1})
    \label{AchRegion_FME_4_5}\\
    &2R_{\overline{\phi}}+ 2R_{\overline{K-1}}+ 2R_{\overline{K}} +2R_{\overline{K-1.K}} \nonumber \\
    & \quad \leq  
    I(X;Y_{j_1}|U_{\overline{\phi}}) 
    +   I(X;Y_{j_2}|U_{\overline{\phi}},U_{\overline{K-1}},U_{\overline{K}})
    \nonumber \\ & \quad
    +I(U_{\overline{\phi}},U_{\overline{K}};Y_{K-1})+I(U_{\overline{\phi}},U_{\overline{K-1}};Y_{K})
    \label{AchRegion_FME_4_6}\\
    \nonumber 
    \\
    &R_{\overline{\phi}}+R_{\overline{K-1}}+R_{\overline{K-1.K}\rightarrow \overline{\phi}}
    \leq I(U_{\overline{\phi}},U_{\overline{K-1}};Y_{K})
    \label{AchRegion_FME_4_7}\\
    &R_{\overline{\phi}}+R_{\overline{K-1}}+R_{\overline{K-1.K}\rightarrow \overline{\phi}}
    \leq I(U_{\overline{\phi}},U_{\overline{K}};Y_{K-1})
    \label{AchRegion_FME_4_8}\\
    \nonumber 
    \\
    &R_{\overline{\phi}}+R_{\overline{K-1}}+R_{\overline{K}}+R_{\overline{K-1.K}\rightarrow \overline{\phi}}\nonumber \\
    &\quad \leq I(U_{\overline{K}};Y_{K-1}|U_{\overline{\phi}})+ I(U_{\overline{\phi}},U_{\overline{K-1}};Y_{K})
    \label{AchRegion_FME_4_9}\\
    &R_{\overline{\phi}}+R_{\overline{K-1}}+R_{\overline{K}}+R_{\overline{K-1.K}\rightarrow \overline{\phi}}\nonumber \\
    &\quad \leq I(U_{\overline{K-1}};Y_{K}|U_{\overline{\phi}})+
     I(U_{\overline{\phi}},U_{\overline{K}};Y_{K-1})
    \label{AchRegion_FME_4_10}\\
    \nonumber 
    \\
    &2R_{\overline{\phi}}+R_{\overline{K-1}}+R_{\overline{K}}
    +R_{\overline{K-1.K}}+R_{\overline{K-1.K}\rightarrow \overline{\phi}}\nonumber \\
    &\quad \leq I(X;Y_j|U_{\overline{\phi}},U_{\overline{K-1}},U_{\overline{K}})\nonumber \\
    &\quad +I(U_{\overline{\phi}},U_{\overline{K}};Y_{K-1})+I(U_{\overline{\phi}},U_{\overline{K-1}};Y_{K})
    \label{AchRegion_FME_4_11}\\
    &2R_{\overline{\phi}}+2R_{\overline{K-1}}+2R_{\overline{K}}
    +R_{\overline{K-1.K}}+R_{\overline{K-1.K}\rightarrow \overline{\phi}}
    \nonumber \\
    &\quad 
    \leq I(X;Y_j|U_{\overline{\phi}})
    +I(U_{\overline{\phi}},U_{\overline{K}};Y_{K-1})+I(U_{\overline{\phi}},U_{\overline{K-1}};Y_{K})
    \label{AchRegion_FME_4_12}\\
    \nonumber 
    \\
    &R_{\overline{K-1.K}}-R_{\overline{K-1.K}\rightarrow \overline{\phi}} \leq
    I(X;Y_j|U_{\overline{\phi}})
     \label{AchRegion_FME_4_13}\\
    &R_{\overline{K-1.K}}-R_{\overline{K-1.K}\rightarrow \overline{\phi}}\nonumber \\
    &\quad \leq
     I(X;Y_{j_1}|U_{\overline{\phi}},U_{\overline{K-1}})
    +I(X;Y_{j_2}|U_{\overline{\phi}},U_{\overline{K}})
     \label{AchRegion_FME_4_14}\\
    &R_{\overline{K-1.K}}-R_{\overline{K-1.K}\rightarrow \overline{\phi}}\nonumber \\
    &\quad \leq
    I(X;Y_j|U_{\overline{\phi}},U_{\overline{K-1}})+I(U_{\overline{K-1}};Y_{K}|U_{\overline{\phi}})
     \label{AchRegion_FME_4_15}\\
    &R_{\overline{K-1.K}}-R_{\overline{K-1.K}\rightarrow \overline{\phi}}\nonumber \\
    &\quad \leq
    I(X;Y_j|U_{\overline{\phi}},U_{\overline{K}})+I(U_{\overline{K}};Y_{K-1}|U_{\overline{\phi}})
     \label{AchRegion_FME_4_16}\\
    &R_{\overline{K-1.K}}-R_{\overline{K-1.K}\rightarrow \overline{\phi}}\nonumber \\
    &\quad \leq
    I(X;Y_j|U_{\overline{\phi}},U_{\overline{K-1}},U_{\overline{K}})\nonumber \\ &\quad +I(U_{\overline{K}};Y_{K-1}|U_{\overline{\phi}})+I(U_{\overline{K-1}};Y_{K}|U_{\overline{\phi}})
     \label{AchRegion_FME_4_17}\\
     \nonumber 
     \\
     & -R_{\overline{K-1.K}}+R_{\overline{K-1.K}\rightarrow\overline{\phi}}\leq 0
    \label{AchRegion_FME_4_18}\\
    &-R_{\overline{K-1.K}\rightarrow\overline{\phi}} \leq 0
    \label{AchRegion_FME_4_19}
\end{align}

Observe again the symmetry in the above five-dimensional polytope, which is that it remains unchanged when $K-1$ and $K$ are exchanged
(even within every group of inequalities separated by new line), as expected.  

Finally, in the fifth step, we project away the split rate $R_{\overline{K-1.K}\rightarrow\overline{\phi}}$ from the five-dimensional polytope described by the inequalities \eqref{AchRegion_FME_4_1}-\eqref{AchRegion_FME_4_19}. 
After doing this projection, and eliminating redundant inequalities, it is left to the reader to verify that we obtain the four-dimensional polytope in the original rates given in Theorem \ref{Th_AchRegion_FourMsgs}. Note again the presence of symmetry of that polytope when $K-1$ and $K$ are exchanged in that four-dimensional polytope. This concludes the outline of the FME procedure.

\section{Proof of Theorem \ref{thm:wbinning}}
\label{app:thmbinning}
The achievable scheme of Theorem \ref{Th_AchRegion_FourMsgs} that is based on rate-splitting and superposition coding is enhanced by adding to it the technique of binning
 that involves generating the $U_{\overline{K}}$ and $U_{\overline{K-1}}$ codebooks at excess rate and creating bins of codewords in place of codewords for the associated messages and selecting a pair of codewords from the product bins selected by the messages that are jointly typical per the allowed joint distribution of $U_{\overline{K}}$ and $U_{\overline{K-1}}$ (conditioned on $U_{\overline{\phi}}$). In particular, the messages $M_{\overline{K}}$, $M_{\overline{K-1}}$, and $M_{\overline{K-1.K}}$ are split as in the proof of Theorem \ref{Th_AchRegion_FourMsgs}. 
The reconstructed message $(M_{\overline{\phi}},M_{\overline{K-1} \rightarrow \overline{\phi}}, M_{\overline{K} \rightarrow \overline{\phi}},M_{\overline{K-1.K} \rightarrow \overline{\phi}} )$ is represented by the cloud center $U_{\overline{\phi}}$ again as in Theorem \ref{Th_AchRegion_FourMsgs}. 
For each $u_{\overline{\phi}}^n(m_{\overline{\phi}},m_{\overline{K-1} \rightarrow \overline{\phi}}, m_{\overline{K} \rightarrow \overline{\phi}},m_{\overline{K-1.K} \rightarrow \overline{\phi}} )$ randomly and independently generate (a) $2^{n \tilde{R}_{\overline{K}}}$ independent sequences $u_{\overline{K}}^n(m_{\overline{\phi}},m_{\overline{K-1} \rightarrow \overline{\phi}}, m_{\overline{K} \rightarrow \overline{\phi}},m_{\overline{K-1.K} \rightarrow \overline{\phi}}, \tilde{m}_{\overline{K}})$, for each $\tilde{m}_{\overline{K}} \in[1:2^{n\tilde{R}_{\overline{K}}}]$ with 
\begin{eqnarray}
\label{eq:excess1}
\tilde{R}_{\overline{K}}\geq R_{\overline{K-1.K} \rightarrow \overline{K}}+R_{\overline{K}\rightarrow \overline{K}}
\end{eqnarray}
and (b) $2^{n \tilde{R}_{\overline{K-1}}}$ independent sequences $u_{\overline{K-1}}^n(m_{\overline{\phi}},m_{\overline{K-1} \rightarrow \overline{\phi}}, m_{\overline{K} \rightarrow \overline{\phi}},m_{\overline{K-1.K} \rightarrow \overline{\phi}}, \tilde{m}_{\overline{K-1}})$ for each $\tilde{m}_{\overline{K-1}} \in[1:2^{n\tilde{R}_{\overline{K-1}}}]$ and
\begin{eqnarray} 
\label{eq:excess2}
\tilde{R}_{\overline{K-1}}\geq R_{\overline{K-1.K} \rightarrow \overline{K-1}}+R_{\overline{K-1}\rightarrow \overline{K-1}}
\end{eqnarray} 
Partition the $2^{n \tilde{R}_{\overline{K}}}$ sequences $u_{\overline{K}}^n$ into $2^{n(R_{\overline{K-1.K} \rightarrow \overline{K}}+R_{\overline{K}\rightarrow \overline{K}})}$ equi-sized subcodebooks (or bins) and the $2^{n \tilde{R}_{\overline{K-1}}}$ sequences $u_{\overline{K-1}}^n$ into $2^{n(R_{\overline{K-1.K} \rightarrow \overline{K-1}}+R_{\overline{K-1}\rightarrow \overline{K-1}})}$ equi-sized subcodebooks,  with the subcodebooks indexed by the sub-message pairs $(m_{\overline{K-1.K} \rightarrow \overline{K}}, m_{\overline{K}\rightarrow \overline{K}})$ and $(m_{\overline{K-1.K} \rightarrow \overline{K-1}}, m_{\overline{K-1}\rightarrow \overline{K-1}})$, respectively. 
The tuple $(m_{\overline{\phi}}$ $,m_{\overline{K-1} \rightarrow \overline{\phi}},$ $ m_{\overline{K} \rightarrow \overline{\phi}},$ $m_{\overline{K-1.K} \rightarrow \overline{\phi}},$ $m_{\overline{K-1.K} \rightarrow \overline{K}},$ $m_{\overline{K}\rightarrow \overline{K}},$ $m_{\overline{K-1.K} \rightarrow \overline{K-1}},$ $ m_{\overline{K-1}\rightarrow \overline{K-1}})$ identifies two subcodebooks of $u_{\overline{K}}^n$ and $u_{\overline{K-1}}^n$ sequences respectively from which a jointly typical ($u_{\overline{K}}^n$, $u_{\overline{K-1}}^n$ ) pair is selected. 
For each such pair, generate $2^{nR_{\overline{K-1.K} \rightarrow\overline{K-1.K} }}$ independent $x^n$ sequences indexed by the sub-message $m_{\overline{K-1.K} \rightarrow\overline{K-1.K}}$. To ensure that each product bin contains a jointly typical pair with arbitrarily high probability, we require by the mutual covering lemma \cite{el2011network} that
\begin{eqnarray}
    R_{\overline{K-1.K} \rightarrow \overline{K}} &+R_{\overline{K}\rightarrow \overline{K}}+R_{\overline{K-1.K} \rightarrow \overline{K-1}}+R_{\overline{K-1}\rightarrow \overline{K-1}} \nonumber \\ & \leq \tilde{R}_{\overline{K}}+\tilde{R}_{\overline{K-1}}-I(U_{\overline{K}} ;U_{\overline{K-1}}|U_{\overline{\phi}} )
    \label{Eq_Binning_cond}
\end{eqnarray}

Receivers $\{Y_j\}_{j=1}^{K-2}$ decode all four intended messages by the joint unique decoding of $X$ (and hence of $U_{\overline{\phi}},U_{\overline{K}},U_{\overline{K-1}},U_{\overline{K-1.K}}$), and this happens successfully as long as
\eqref{AchRegion_FME_0_j_1}-\eqref{AchRegion_FME_0_j_5} hold.
On the other hand, receiver $Y_{K-1}$ finds its two intended messages $M_{\overline{\phi}}$ and $ M_{\overline{K}}$ by jointly uniquely decoding $(U_{\overline{\phi}},U_{\overline{K}})$, and similarly, receiver $Y_{K}$ find its two intended messages $M_{\overline{\phi}}$ and $M_{\overline{K-1}}$ by jointly uniquely decoding  $(U_{\overline{\phi}},U_{\overline{K-1}})$. Receivers $Y_{K-1}$ and $Y_{K}$ decode their intended pair of messages successfully provided the following inequalities hold:
\begin{align}
    &R_{\overline{\phi}}+R_{\overline{K} \rightarrow \overline{\phi}}
     +R_{\overline{K-1} \rightarrow \overline{\phi} } 
    +R_{\overline{K-1.K} \rightarrow \overline{\phi} }+\tilde{R}_{\overline{K}}
    \nonumber \\ & \hspace{1cm}
    \leq I(U_{\overline{\phi}},U_{\overline{K}};Y_{K-1})
    \label{AchRegion_FME_0_Binning_K-1_1}\\
    &\tilde{R}_{\overline{K}}\leq I(U_{\overline{K}};Y_{K-1}|U_{\overline{\phi}})
    \label{AchRegion_FME_0_Binning_K-1_2}\\
    &R_{\overline{\phi}}+R_{\overline{K}\rightarrow \overline{\phi}}
     +R_{\overline{K-1} \rightarrow \overline{\phi} } +R_{\overline{K-1.K} \rightarrow \overline{\phi} }+\tilde{R}_{\overline{K-1}}
     \nonumber \\  \hspace{1cm}
     &\leq I(U_{\overline{\phi}},U_{\overline{K-1}};Y_{K})
    \label{AchRegion_FME_0_Binning_K_1}\\
    &\tilde{R}_{\overline{K-1}}\leq I(U_{\overline{K-1}};Y_{K}|U_{\overline{\phi}})
    \label{AchRegion_FME_0_Binning_K_2}
\end{align}
At this point, inequalities \eqref{AchRegion_FME_0_j_1}-\eqref{AchRegion_FME_0_j_5}, \eqref{eq:excess1}- 
\eqref{AchRegion_FME_0_Binning_K_2} and the non-negativity of message rate constraints define a 11-dimensional polytope. Extending the FME method as outlined in Appendix \ref{Appendix_FME_TH1} for the projection required in Theorem \ref{Th_AchRegion_FourMsgs} \textemdash the details of which are left to the reader \textemdash we obtain the rate region given in the statement of Theorem \ref{thm:wbinning}. 

\bibliographystyle
{IEEEtran}
\bibliography{IEEEabrv,Cite}

\end{appendices}
\end{document}